\newtheorem{thm}{THEOREM}[section] \newtheorem{lm}[thm]{LEMMA}
 \theoremstyle{definition}
\newcommand{\tr}{{\rm Tr}} 
\renewcommand{\|}{{\Vert}} 
 \numberwithin{equation}{section}
\newcommand{\R}{{\mathord{\mathbb R}}}
\newcommand{\Z}{{\mathord{\mathbb Z}}}
\begin{document}

\def\tr{{\rm Tr}}

\title{Approach to the
    steady state in kinetic models with thermal reservoirs at different temperatures}

\author{\vspace{5pt} E. A. Carlen$^1$,  R. Esposito$^3$, J. L. Lebowitz$^{1,2}$,\\ R. Marra$^4$ and
  C. Mouhot$^{5}$ \\
  \vspace{5pt}\small{$1.$ Department of Mathematics, $2.$ Department
    of Physics,}\\
  [-6pt]\small{Rutgers University, 110 Frelinghuysen Road, Piscataway NJ 08854-8019 USA}\\
  \small{$3.$ M\&MOCS, Univ. dell'Aquila, Cisterna di Latina, (LT) 04012 Italy}\\
  \small{$4.$ Dipart. di Fisica and Unit\`a INFN, Universit\`a di Roma Tor Vergata, 00133 Roma, Italy}\\
\vspace{5pt}\small{$5.$  DPMMS, University of Cambridge, Wilberforce Road, Cambridge CB3 0WA, UK}\\
}

\date{}

\maketitle

\medskip

\let\thefootnote\relax\footnote{ \copyright \, 2017 by the
authors. This paper may be reproduced, in its entirety, for
non-commercial purposes.}

\begin{abstract} We continue the  investigation of  kinetic models of a system in contact via stochastic interactions
with several spatially homogeneous thermal reservoirs at different temperatures. Considering models different from those investigated in \cite{CLM}, we explicitly compute the unique spatially uniform non-equilibrium steady state (NESS) and prove that it is approached exponentially fast from any uniform initial state. This leaves open the question of whether there exist NESS that are not spatially uniform. Making a further simplification of our models, we then  prove non-existence of such NESS and exponential approach to the unique spatially uniform NESS (with a computably  boundable  rate). The method of proof  relies on refined Doeblin estimates and other probabilisitic techniques, and is quite different form the analysis in \cite{CLM} that was based on contraction mapping methods. 
\end{abstract}

\section{Introduction} \label{intro}

We investigate the time evolution and {\em non-equilibrium steady states} (NESS)
of a gas, described on the mesoscopic scale   by a one particle phase space probability distribution $f(x,v,t)$ in contact with heat reservoirs at
different temperatures. The models considered here and the methods of analysis are different from this considered in \cite{CLM} by some of the same authors.

Here $x\in \Lambda$, where $\Lambda$ is a $d$-dimensional torus of side length $L$,
and $v \in \R^d$. Starting with some
initial state $f(x,v,0)$, $f$  changes in time according to an autonomous equation of the  general form
\begin{equation}\label{eq1}
\frac{\partial}{\partial t} f(x,v,t)    +{\rm div}_x(vf(x,v,t) = Q[f](x,v,t) + \sum_{j=1}^k L_j f(x,v,t)\  .           
\end{equation}

$Q$ accounts for the effects of binary molecular collisions, as in the Boltzmann collision kernel, or other interactions between the particles. The $L_j$
account for the effects of interactions with the thermal reservoirs. 

In the absence of the reservoirs, (\ref{eq1}) reduces to 
\begin{equation}\label{eq1a}
\frac{\partial}{\partial t} f(x,v,t)    +{\rm div}(vf(x,v,t) = Q[f](x,v,t)\  .           
\end{equation}
which would be the Boltzmann equation were $Q$ a Boltzmann collision kernel. In all of the models we discuss, $Q$ will be such that solutions of
(\ref{eq1a}) conserve the total energy, and their only steady states are the global (spatially uniform) Maxwellian  densities 
\begin{equation}\label{MTdef}
M_{T,u}(x,v) =|\Lambda|^{-1}(2\pi T)^{-d/2} e^{-|v-u|^2/2T}\ ,
\end{equation}
where $|\Lambda|$ denotes the volume of the torus $\Lambda$, $T>0$ and $u\in \R^d$. Under certain assumptions on the solutions for the actual Boltzmann equation, and more generally for the simplifications that we study here, it is known \cite{AEP,DV,GMM,Vil} that the solution $f(x,v,t)$ to (\ref{eq1a}) converges  to 
$M_T(v)$ at a rate that in the most recent of these references is shown to be exponential,  The particular values of $T$ and $u$ are fixed by the initial data $f_0(x,v)$ and  momentum and energy  conservation:
$$u = \int_{\R^d} vf_0(x,v){\rm d}v \qquad{\rm and}\qquad  T = \frac{1}{d}\int_{\R^d} |v-u|^2f_0(x,v){\rm d}v\ .$$

Inclusion of several reservoirs at different temperatures complicates matters considerably. In (\ref{eq1}),
each  $L_j$  is the adjoint of the generator of the jump process modeling interactions with the $j$th thermal reservoir at temperature $T_j$:  For $f\in L^1(\Lambda\times\R^d)$,
\begin{equation}\label{Ljdef}
L_jf(x,v) = \eta_j [m_{T_j}(v)\rho_f(x) - f(x,v)]  
\end{equation}
where 
\begin{equation}\label{mTdef}
m_T(v) =(2\pi T)^{-d/2} e^{-|v|^2/2T} \quad{\rm and}\quad \rho_f(x) = \int_{\R^d} f(x,v){\rm d}v\ .
\end{equation}
Note that $m_T$ is a probability density on $\R^d$, unlike $M_T$ which is a probability density on $\Lambda \times \R^d$. 
The process described by $L_j$ is the jump process in which  jump times arrive in Poisson stream, and then there occurs a jump in phase space from $(x,v)$ to $(x,w)$ -- keeping the position fixed -- where $w$ is chosen from the distribution $m_{T_j}(w) {\rm d}w$, independent of $v$.

If $Q$ were absent in (\ref{eq1}), this equation would reduce to 
\begin{equation}\label{eq1b}
\frac{\partial}{\partial t} f(x,v,t)    +{\rm div}(vf(x,v,t) =  \sum_{j=1}^k L_j f(x,v,t)\  .           
\end{equation}
This is a relatively tractable equation, and is a special case of a type of equation considered in \cite{BL,LB}.  In this case, the only steady state is 
\begin{equation}\label{combined}
\frac{1}{|\Lambda|} \left(\sum_{j=1}^k \eta_j\right)^{-1}\sum_{j=1}^k \eta_j m_{T_j}(v)\ .
\end{equation}
That is, in contrast with (\ref{eq1a}), the long time behavior is independent of the initial data since energy and 
momentum are not conserved, but are regulated by the reservoirs. For this equation it is also known that 
solutions approach the steady state exponentially fast \cite{BL}.  Such non-Maxwellian steady states are non-equilibrium steady states; they are not Maxwellian equilibrium states.  

When there is only a single reservoir at temperature $T$, $M_T$ is the unique steady state for (\ref{eq1}). 
This may be proved by a simple modification of the  entropy argument that identifies the steady states of (\ref{eq1a}); see, e.g., the book  of Cercignani \cite{Cer}.
The {\em relative entropy} of $f$ with respect to  the Maxwellian steady state $f_\star := |\Lambda|^{-1}m_T(v)$ is the quantity
\begin{equation}\label{entprod}
H(f|f_\star)  := \int_{\Lambda \times \R^d} f \log f {\rm d}x {\rm d}v  -  \int_{\Lambda \times \R^d} f \log f_\star {\rm d}x {\rm d}v\ .
\end{equation}
The rate of change of $D(f||f_\star)$ along any solution $f(x,v,t)$ to (\ref{eq1}) with a singe reservoir is the quantity
${\displaystyle {\mathcal D} (f||f_\star) := \frac{\rm d}{{\rm d}t}H(f|f_\star)  }$, and 
\begin{equation}\label{relent}
{\mathcal D} (f||f_\star)  =  \int_{\Lambda \times \R^d}  \log f (Q[f]+L[f]) {\rm d}x {\rm d}v  -  
\int_{\Lambda \times \R^d}  \log f_\star  (Q[f]+L[f]) {\rm d}x {\rm d}v\ .
\end{equation}

Since
$\log f_\star$ is a linear combination of $|v|^2$ and the constant functions $1$, and since mass and energy are conserved by $Q$,
${\displaystyle \int_{\Lambda \times \R^d}  \log f Q[f] {\rm d}x {\rm d}v = 0}$. Moreover, whenever $f$ satisfies
\begin{equation}\label{enbal}
\int_{\Lambda \times \R^d} |v|^2 f(x,v) {\rm d}x {\rm d}v = dT
\end{equation}
${\displaystyle \int_{\Lambda \times \R^d}  \log f L[f] {\rm d}x {\rm d}v = 0}$.
It is easy to see that for {\em any} steady state $f$  of (\ref{eq1}) with a single reservoir at temperature $T$, (\ref{enbal}) must be satisfied.   Then  for any such steady state $f$,
$$
{\mathcal D} (f||f_\star)  =  \int_{\Lambda \times \R^d}  \log f (Q[f]+L[f]) {\rm d}x {\rm d}v\ .
$$ 
By Boltzmann's $H$-Theorem, 
$\int_{\Lambda \times \R^d}  \log f  Q[f]{\rm d}x {\rm d}v \leq 0$. A simple calculation shows that 
$\int_{\Lambda \times \R^d}  \log f  L[f]{\rm d}x {\rm d}v \leq 0$
with equality if and only if $f$ has the form $\rho(x)m_T(v)$. 

Since the dissipation must vanish in any steady state, every steady state must have the form $f(x,v) = 
\rho(x)m_T(v)$. But then $0 = {\rm div}_x(\rho(x)m_T(v)) = v\cdot \nabla\rho(x)m_T(v)$. Hence $\rho$ is constant, and then
$f= f_\star$.  Moreover, this entropy argument can be extended, using the relative entropy as a Lyapunov function, to bound the rate of convergence to the unique steady state. 

However, when there is more than one temperature, $|\Lambda|^{-1}m_T(v)$  will not be a steady state  
for any temperature $T$, and then  the entropy argument sketched above, which depended on the fact that 
$\log(|\Lambda|^{-1}m_T(v))$ is a 
linear combination of $|v|^2$ and the constant function $1$, is no longer applicable.  
The steady states to (\ref{eq1}), which do exist, but are not Maxwellian, are non-equilibrium steady states (NESS).

In short, when there are reservoirs with more than one temperature, Boltzmann's 
$H$-Theorem no longer is of use for determining the
steady sates, or estimating the rate of approach to them, and in this case, much less is known about the steady states and the approach to them 
unless the temperatures are extremely close.  

The situation in which the temperatures of the reservoirs are close can be treated perturbatively, but this is far from simple.
In physical situations, the different reservoirs would occupy different 
spatial regions, a natural case being
that in which  Maxwell boundary conditions with different temperatures are imposed on different parts 
of the boundary of the domain $\Lambda$. Recently   Esposito et.~al. \cite{EGKM} proved  that when
$Q[f]$ is a Boltzmann collision term and the temperatures along the boundary  are all close to some fixed temperature  $T$, then there is an NESS close to $M_{T,0}$ and when the initial data
$f(x,v,0)$ is close
to $M_{T,0}$, then $f(x,v,t)$ will indeed approach  $M_{T,0}$. 
Even in this case, one does not know if there are not other NESS, and 
the methods in \cite{EGKM} do not extend to the case where the initial state is not close to $M_{T,0}$.

Here we treat simplified models in which the reservoirs act not at the boundaries, but throughout the volume of 
$\Lambda$ in a spatially uniform manner.  We consider this treatment as a first step toward a non-perturbative 
analysis of systems with reservoirs at substantially different temperatures.  The model may, however, have some direct physical relevance. One might consider the reservoirs to interact with the particles through collisions  with thermal baths of photons that permeate $\Lambda$. 
Since our reservoirs act not only at the boundary of $\Lambda$, but uniformly throughout 
$\Lambda$, we conjecture that there are no spatially non-uniform NESS for these models. However, 
at present, we can only prove this for relatively simple models that we now precisely describe.

\subsection{Thermostatted  kinetic   equations}

Let $\Lambda$ be a torus in $\R^d$ with volume $|\Lambda|=L^d$. (The rate of approach to equilibrium will depend on $L$; see Section 3.1.)
For any time dependent one particle distribution $f(x,v,t)$ on the phase space $\Lambda\times\R^d$,
define the {\em hydrodynamic moments}
\begin{eqnarray}\label{moments}
 \rho(x,t) &=& \int_{\R^d}f(x,v,t){\rm d}v \nonumber\\
 u(x,t) &=& \frac{1}{\rho(x,t)} 
\int_{\R^d}vf(x,v,t){\rm d}v \nonumber\\
T(x,t) &=& \frac{1}{d\rho(x,t)} \int_{\R^d}|v- u(x,t)|^2 f(x,v,t){\rm d}v \ .
\end{eqnarray}
(Note that $u(x,t)$ and $T(x,t)$  are only defined when $\rho(x,t)\neq 0$.)
Define the instantaneous (at time $t$)  {\em local Maxwellian corresponding to} $f$, denoted by $M_f(x,v,t)$,
as
\begin{equation}\label{locmax}
M_f(x,v,t) = \rho(x,t) (2\pi T(x,t))^{-d/2} \exp( - |v- u(x,t)|^2/2T(x,t))\ ,
\end{equation}
with the natural convention that $M_f(x,v,t)=0$ when $\rho(x,t)= 0$.

We  consider two forms of $Q$. The first is a kinetic (self consistent) Fokker-Planck form, 
and the second is of BGK form.  

The {\em kinetic Fokker-Planck equation} (KFP) is the self-consistent equation
\begin{equation}\label{kfp}
\frac{\partial}{\partial t} f(x,v,t)   +{\rm div}_x(vf(x,v,t))  =
 T(x,t) {\rm div}_v \left( M_f(x,v,t) )\nabla_v \frac{f(x,v,t)}{M_f(x,v,t)}\right)\ ,
\end{equation}
where a diffusion constant has been absorbed into the time scale for convenience.
The right hand side of (\ref{kfp}) can be written as
\begin{equation}\label{kfp2}
\mathcal{G}_f f(x,v,t) := T(x,t)\Delta_v f(x,v,t) + {\rm div}_v\left((v- u(x,t))f(x,v,t)\right)\ .
\end{equation}

The BGK form of the equation is
\begin{equation}\label{bgk}
\frac{\partial}{\partial t} f(x,v,t)   +{\rm div}_x(vf(x,v,t))  =
\alpha [M_f(x,v,t)  -f(x,v,t)]\ 
\end{equation}
for a constant $\alpha>0$.

Both of these  equations conserve total energy, momentum and mass. Without loss of generality, we restrict our
attention to initial data (for these equations and others to be considered) $f_0$ such that
\begin{equation}\label{zeromom}
\int_{\Lambda\times \R^d} vf(x,v,t){\rm d}x{\rm d}v = 0\ .
\end{equation}
Under this condition, the only spatially homogeneous steady states of  (\ref{kfp}) and (\ref{bgk}) are the {\em global Maxwellian} phase space probability densities of the form
\begin{equation}\label{MTdef}
M_T(x,v) = \frac{1}{|\Lambda|}(2\pi T)^{-d/2} e^{-|v|^2/2T}\ .
\end{equation}

The {\em thermostatted Fokker-Planck equation} is 
\begin{equation}\label{tkfp}
\frac{\partial}{\partial t} f(x,v,t)   +{\rm div}_x(vf(x,v,t) ) =
 \mathcal{G}_f f(x,v,t)) +  \sum_{j=1}^k L_j f(x,v,t)  
\end{equation}
where $\mathcal{G}_f f(x,v,t)$ is given by (\ref{kfp2}). 
The thermostatted BGK equation is 
\begin{equation}\label{tbgk}
\frac{\partial}{\partial t} f(x,v,t)   +{\rm div}_x(vf(x,v,t))  =
\alpha[M_f(x,v,t) - f(x,v,t)] +  \sum_{j=1}^k L_j f(x,v,t)  \ .
\end{equation}

Let $f_*(x,v)$ be any steady state solution of (\ref{tkfp}). 
Multiplying by  a smooth function $\phi(x)$ and integrating over phase space, we obtain
\begin{equation}\label{ste2}
\int_\Lambda \nabla \phi(x)\cdot u(x) \rho(x){\rm d}x = 0\ .
\end{equation}
In two or three dimensions, this says that the divergence of $\rho u$ is zero, and in one dimension it says that
$\rho u$ is constant, and then under the condition (\ref{zeromom}),  $u(x) = 0$ for all $x$. 

Next, in one dimension, multiplying by $\phi(x)v$ and integrating over phase space
\begin{equation}\label{ste21}
\int_\Lambda \nabla \phi(x)\rho(x) T(x){\rm d}x = 0\ ,
\end{equation}
so that $\rho(x)T(x)$, the pressure,  is constant. But even in one dimension, this does not imply that $\rho(x)$ and $T(x)$ are individually constant.  The same remarks apply in the BGK case as well.

There is one simple result that can be proved about convergence to a steady state for solutions of (\ref{tkfp}) and (\ref{tbgk}):
More precisely,   for a phase space density $f$ define
\begin{equation}\label{Tdef}
T_f  = \frac{1}{d}\int_\Lambda \int_{\R^d} |v|^2 f(x,v){\rm d}v{\rm d}x\ .
\end{equation}
and let $M_{T_f}$ denote the global Maxwellian with temperature $T_f$; i.e.,  the probability density given by (\ref{MTdef}) with $T =T_f$.

\begin{lm}\label{templem2}For any solution $f(x,v,t)$ of (\ref{tkfp}) and(\ref{tbgk})
\begin{equation}\label{temprate2}
\frac{{\rm d}}{{\rm d}t} T_{f}(t) = \sum_{j=1}^d \eta_j \left( T_j - T_f(t)\right)\ .
\end{equation}
Therefore, if we define the quantities $\eta$ and $T_\infty$ by
\begin{equation}\label{Tinf}
\eta := \sum_{j=1}^k \eta_j \qquad{\rm and}\qquad T_\infty = \frac{1}{ \eta}\sum_{j=1}^k \eta_j T_j\  ,
\end{equation}
we have that
\begin{equation}\label{Tinf20}
 T_f(t) = T_\infty + e^{-t\eta}(T_{f}(0) - T_\infty) \  .
\end{equation}
\end{lm}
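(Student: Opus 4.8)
The plan is to differentiate $T_f(t)$ under the integral sign, substitute the evolution equation for $\partial_t f$, and observe that every term on the right-hand side except the reservoir contributions integrates to zero against $|v|^2$ over phase space. Concretely, $\frac{\rm d}{{\rm d}t}T_f(t) = \frac1d \int_\Lambda \int_{\R^d} |v|^2 \partial_t f(x,v,t)\,{\rm d}v\,{\rm d}x$, and on the right one replaces $\partial_t f$ using (\ref{tkfp}) or (\ref{tbgk}).

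The transport term drops out immediately: since $|v|^2$ is independent of $x$ and $\Lambda$ is a torus, $\int_\Lambda {\rm div}_x(vf)\,{\rm d}x = 0$, hence its contribution vanishes. The collision term is handled by the energy conservation property already noted for (\ref{kfp}) and (\ref{bgk}): for the Fokker--Planck form one integrates by parts twice in $v$, using $\Delta_v |v|^2 = 2d$, $\nabla_v|v|^2 = 2v$, and the identity $\int_{\R^d} |v|^2 f\,{\rm d}v = d\rho T + \rho|u|^2$ (which follows from the definitions (\ref{moments})) to get $\int_{\R^d} |v|^2 \mathcal{G}_f f\,{\rm d}v = 0$ pointwise in $x$; for the BGK form it is immediate, since $M_f$ is constructed to have the same $\rho,u,T$ as $f$, so $\int_{\R^d} |v|^2 (M_f - f)\,{\rm d}v = 0$ pointwise in $x$. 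Thus only the reservoir terms survive. For each of them, using (\ref{Ljdef})--(\ref{mTdef}), the facts that $\int_{\R^d}|v|^2 m_{T_j}(v)\,{\rm d}v = dT_j$ and $\int_\Lambda \rho_f(x)\,{\rm d}x = 1$ (conservation of total mass), and (\ref{Tdef}), one finds $\int_\Lambda \int_{\R^d}|v|^2 L_j f\,{\rm d}v\,{\rm d}x = d\,\eta_j(T_j - T_f)$. Summing over $j$ and dividing by $d$ yields (\ref{temprate2}).

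Finally, rewriting the right-hand side of (\ref{temprate2}) as $\eta\,(T_\infty - T_f)$ with $\eta$ and $T_\infty$ as in (\ref{Tinf}), the identity (\ref{temprate2}) becomes the elementary scalar linear ODE $\dot T_f = -\eta\,(T_f - T_\infty)$; solving it with initial value $T_f(0)$ gives exactly (\ref{Tinf20}).

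The only genuine subtlety is analytic rather than algebraic: one must justify differentiation under the integral sign and the integrations by parts in $v$, which requires a priori control on $f$ and $\nabla_v f$ --- in particular enough velocity decay that the boundary terms at $|v|=\infty$ vanish and that $|v|^2 f$ is integrable uniformly on compact time intervals. Granting the regularity and moment bounds available for the solutions under consideration, the computation above is rigorous; alternatively it is the formal counterpart of an identity that holds directly in the weak formulation.
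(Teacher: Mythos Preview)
Your proof is correct and follows essentially the same approach as the paper: differentiate $T_f$ under the integral, drop the transport term by periodicity in $x$, drop the collision term by energy conservation, compute the reservoir contribution directly, and solve the resulting linear ODE. You in fact supply more detail than the paper does (the explicit integration by parts for $\mathcal{G}_f$ and the moment identity for BGK), and your remark on the analytic justification is a reasonable caveat that the paper leaves implicit.
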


\begin{proof}Note that
${\displaystyle \int_V \int_{\R^d} |v|^2 {\rm div}_x(vf(x,v,t) ){\rm d}v {\rm d}x = 0}$
so that the term representing the effects of spatial inhomogeneity drops out of (\ref{tkfp}). Also, since energy is conserved globally by the Fokker-Planck term. 
$$\int_V \int_{\R^d} |v|^2 G_{T_f(t)}[f](x,v,t){\rm d}v {\rm d}x = 0\ .$$
The rest follows from the definition of the reservoir terms. The analysis for (\ref{tbgk}) is essentially the same. Solving the equation
(\ref{temprate2}) yields (\ref{Tinf}) and (\ref{Tinf20}).
\end{proof}

In particular, Lemma~\ref{templem2} says that for any solution $f(x,v,t)$ of (\ref{tkfp}) or (\ref{tbgk}) ${\displaystyle \lim_{t\to\infty}T_f(t) = T_\infty}$ and the convergence is exponentially fast.

In Section \ref{sect2}, 
we use Lemma~\ref{templem2}, among other devices, to determine the explicit forms of all   {\em spatially homogeneous} steady states for  both thermostatted equations. 

To go beyond the determination of the spatially uniform steady states, and to prove exponential convergence to them from general initial data, or even that there are no steady states that are not spatially uniform, we  simplify our model. We modify $Q$ so that energy 
is conserved globally but not locally: We  replace $M_f(x,v,t)$ in (\ref{kfp}) by
$$\rho(x,t) M_{T_f(t)}(v)\ .$$  
Defining the operator $G_T$ by
\begin{equation}\label{FPdef}
G_T[g](v) =  T{\rm div}_v \left(M_T \nabla \frac{f}{M_T}\right) = T\Delta g(v) + 
{\rm div}[vg(v)] \ ,
\end{equation}
we may write the resulting equation as
\begin{equation}\label{eq30}
\frac{\partial}{\partial t} f(x,v,t)    +{\rm div}_x(vf(x,v,t)) = G_{T_f(t)}[f](x,v,t)  + \sum_{j=1}^k L_j f(x,v,t)\ ,        
\end{equation}

Likewise, in the BGK case,  we replace $M_f(x,v,t)$ in the gain term of $Q$ by $\rho(x,t) M_{T_f(t)}(v)$. The resulting equation is
\begin{multline}\label{tbgk22}
\frac{\partial}{\partial t} f(x,v,t)   +{\rm div}_x(vf(x,v,t))  =\\
\alpha[\rho_f(x,t)M_{T_f(t)}(v,t) - f(x,v,t)] +  \sum_{j=1}^k L_j f(x,v,t)  
\end{multline}

Equations (\ref{eq30}) and (\ref{tbgk22}) are still non-linear, but only in a superficial way. 
This means that we may regard $T_f(t)$  in equations (\ref{eq30}) or (\ref{tbgk22}) as {\it a-priori}  known, and then these equations become the ({\em linear}) Kolmogorov forward equations of Markov processes with time-dependent generators.   
This allows us to apply probabilisitic methods to the problem of uniqueness of steady states, and to the problem of 
proving exponentially fast relaxation to the steady states. 

Our main results are the following:

\begin{thm}\label{allunif}  Let $f_*(x,v)$ be an NESS for (\ref{eq30}) or (\ref{tbgk22}). Then $f_*(x,v)$ does not depend on $x$. That is, every NESS for (\ref{eq30}) and (\ref{tbgk22}) is spatially uniform. 
\end{thm}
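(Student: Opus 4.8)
\emph{Proof strategy.} The plan is to observe that any NESS solves a genuinely \emph{linear} stationary equation — the forward Kolmogorov equation of an explicit Markov process on $\Lambda\times\R^d$ — and then to show that this process has a \emph{unique} invariant probability measure; since the process is invariant under spatial translations of the torus, the unique invariant measure, hence every NESS, must be independent of $x$. First I would invoke Lemma~\ref{templem2}: if $f_*$ is a steady state then ${\rm d}T_{f_*}/{\rm d}t\equiv 0$, so $T_{f_*}=T_\infty$, and therefore $f_*$ satisfies the linear equation obtained from (\ref{eq30}) (resp.\ (\ref{tbgk22})) by freezing the temperature at $T_\infty$, namely ${\rm div}_x(vf_*)=G_{T_\infty}[f_*]+\sum_j L_jf_*$ (resp.\ with $G_{T_\infty}$ replaced by $\alpha[\rho_{f_*}(x)m_{T_\infty}(v)-f_*(x,v)]$). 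Each of these is the stationary forward equation of the Markov process $(X_t,V_t)$ on $\Lambda\times\R^d$ given by free transport $\dot X=V$, together with (FKP case) an Ornstein--Uhlenbeck diffusion of $V$ with invariant temperature $T_\infty$, resp.\ (BGK case) resampling of $V$ from $m_{T_\infty}$ at rate $\alpha$, and in both cases independent Poisson clocks of rates $\eta_j$ at whose rings $V$ is resampled from $m_{T_j}$. A NESS, being a stationary probability density, is exactly an invariant probability measure for this process.

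Next I would establish the two ingredients of the Harris ergodic theorem. \textbf{(i) A Lyapunov estimate.} With $W(x,v)=1+|v|^2$, the transport term leaves $\int W\,{\rm d}v$ unchanged, the OU (resp.\ BGK) mechanism pulls $|V|^2$ toward $dT_\infty$, and each reservoir jump toward $dT_j$, so the generator $\mathcal{L}$ satisfies $\mathcal{L}W\le K-cW$ for constants $K,c>0$; because $\Lambda$ is compact, the sublevel sets $\{W\le M\}=\Lambda\times\overline{B}_R$ are compact. \textbf{(ii) A uniform minorization (refined Doeblin) estimate.} I would show there exist $t_*>0$, $R>0$ and a nonzero measure $\nu$ on $\Lambda\times\overline{B}_R$ with $\mathbb{P}_{(x,v)}\big((X_{t_*},V_{t_*})\in\,\cdot\,\big)\ge\nu(\cdot)$ for \emph{every} starting point $(x,v)$. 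To prove this I would condition on the event that the first clock rings early, in $[0,\delta]$, and resamples $V$ to a centered Gaussian — this "forgets" the initial velocity. For (\ref{eq30}), the subsequent OU diffusion coupled through transport makes $(X_{t_*},V_{t_*})$ an explicit nondegenerate Gaussian on $\R^d\times\R^d$ (Kolmogorov's hypoelliptic example), wrapped onto the torus in $x$, whose density is bounded below on $\Lambda\times\overline{B}_R$ uniformly in the starting point. For (\ref{tbgk22}), where there is no velocity diffusion, I would instead use \emph{two} well-separated resampling rings, so that the position acquires an independent nondegenerate Gaussian increment (whose torus-wrap is bounded below everywhere) while the terminal velocity is an independent fresh Gaussian, again producing a joint density bounded below on $\Lambda\times\overline{B}_R$. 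Harris's theorem then yields a unique invariant probability measure $\mu_\infty$ (with exponential relaxation in a $W$-weighted norm, not needed here).

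Finally, the generator $\mathcal{L}$ commutes with the torus translations $\tau_a\colon(x,v)\mapsto(x+a,v)$, so $(\tau_a)_\#\mu_\infty$ is again an invariant probability measure; uniqueness forces $(\tau_a)_\#\mu_\infty=\mu_\infty$ for all $a\in\Lambda$, whence $\mu_\infty=\tfrac{1}{|\Lambda|}{\rm d}x\otimes g(v)\,{\rm d}v$ for some probability density $g$ on $\R^d$. Since every NESS is an invariant probability measure, every NESS equals this $\mu_\infty$ and in particular does not depend on $x$.

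The step I expect to be the main obstacle is the uniform minorization (ii), especially for the BGK equation~(\ref{tbgk22}): with no velocity diffusion, the smoothing of the \emph{joint} phase-space law must be produced entirely from the jump mechanism and free transport, which requires choosing the jump times, tracking the conditional Gaussian increments, controlling the wrap-around on the torus, and lower-bounding a Gaussian density on a compact ball — all uniformly over the unbounded set of initial velocities. The Lyapunov bound and the concluding symmetrization argument are routine by comparison.
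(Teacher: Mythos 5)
Your proposal is correct and follows the same backbone as the paper's argument: freeze the temperature at $T_\infty$ using Lemma~\ref{templem2}, recognize the resulting stationary equation as the forward equation of an explicit Markov process, and invoke a refined Doeblin estimate to get a unique invariant probability measure. The difference is in the last step. The paper deduces spatial uniformity by exhibiting one spatially homogeneous steady state explicitly (Theorem~\ref{ssf} for the Fokker--Planck case, Lemma~\ref{bgksss} for BGK) and then appealing to uniqueness; you instead deduce it from uniqueness plus the fact that the generator commutes with torus translations $\tau_a$, so that $(\tau_a)_\#\mu_\infty$ is also invariant and hence equals $\mu_\infty$. Your symmetrization shortcut is cleaner in that it does not require the explicit formula for the NESS at all, and it transfers verbatim to $k>2$ reservoirs without the case analysis about point masses at $T_\infty$. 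One small over-engineering in your sketch: you invoke the Harris framework (Lyapunov function $W=1+|v|^2$ plus minorization on the compact sublevel set $\Lambda\times\overline{B}_R$), but you then claim the minorization holds for \emph{every} starting point $(x,v)$, which is exactly what happens here because the first reservoir jump resamples the velocity independently of its initial value. That is the classical Doeblin condition, for which no Lyapunov function is needed, and it is the form the paper establishes in Section~3.1 (Lemma~\ref{uniform}, Theorem~\ref{deficit}). Your caution about the BGK case is well placed and correctly resolved: with no velocity diffusion one needs two well-separated resamplings so that the linear map $(v_1,v_2)\mapsto(x_t,v_t)$ is nondegenerate, whereas for the Fokker--Planck case a single jump followed by the hypoelliptic Kolmogorov diffusion already produces a jointly nondegenerate Gaussian.
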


\begin{thm}\label{approach} Let $f$ be a solution of  either (\ref{eq30}) or (\ref{tbgk22}) with $T_f(0) < \infty$. Let $f_\infty$ be the 
unique stationary state, which for solutions of (\ref{eq30}) is given by Theorem~\ref{ssf}, and for solutions of (\ref{tbgk22})
is given by Lemma~\ref{bgksss}.  Then  there are finite positive and explicitly computable  constants $C$ and $c$ such that
$$\int_{\Lambda\times \R^d}|f(x,v,t) - f_\infty(v)|{\rm d}x{\rm d}v \leq Ce^{-ct}\ .$$

\end{thm}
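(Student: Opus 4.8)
The plan is to exploit the fact, emphasized before the statement, that the nonlinearity in (\ref{eq30}) and (\ref{tbgk22}) is only apparent. By Lemma~\ref{templem2} the temperature along any solution is the explicitly known function
\[
\theta(t) \;:=\; T_f(t) \;=\; T_\infty + e^{-\eta t}\bigl(T_f(0) - T_\infty\bigr),
\]
which converges to $T_\infty$ exponentially fast and remains in the compact interval $I := [\,\min\{T_\infty, T_f(0)\}, \max\{T_\infty, T_f(0)\}\,] \subset (0,\infty)$ (and in $(0, T_\infty]$, bounded away from $0$ for $t \ge t_0 > 0$, if $T_f(0)=0$). Replacing $T_f(t)$ by $\theta(t)$ in (\ref{eq30}) (resp.\ $T_f(t)$ by $\theta(t)$ in (\ref{tbgk22})) turns these equations into \emph{linear}, time-inhomogeneous Kolmogorov forward equations $\partial_t g = \mathcal{L}_{\theta(t)} g$, with two-parameter Markov propagator $U_{s,t}$. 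For (\ref{eq30}), $U_{s,t}$ propagates the law of the process $(X_r, V_r)$ on $\Lambda \times \R^d$ with $\dot X_r = V_r$ and $\dd V_r = -V_r\,\dd r + \sqrt{2\theta(r)}\,\dd B_r$ between reservoir events, reservoir events arriving at rate $\eta$ and resampling $V$ from $m_{T_j}$ with probability $\eta_j/\eta$; for (\ref{tbgk22}) the Ornstein--Uhlenbeck piece is dropped and there are in addition thermalization events at rate $\alpha$ resampling $V$ from the Gaussian $M_{\theta(r)}(\cdot)$. In either case $f(t) = U_{0,t} f(0)$, and the remaining work concerns only the linear evolution $U_{s,t}$.

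The core is a Harris-type contraction for $U_{s,t}$, built from two estimates that are uniform in the frozen parameter $\theta \in I$ (hence, since $\theta(\cdot)$ takes values in $I$, uniform in the starting time). First, a Foster--Lyapunov drift bound: with $W(v) := 1 + |v|^2$ a short computation (the localized form of Lemma~\ref{templem2}) gives $\mathcal{L}_\theta^* W \le -\lambda W + b$ with $\lambda, b > 0$ independent of $\theta \in I$, controlling the non-compact velocity variable and producing controlled return times to a compact set $\Lambda \times K$. Second --- and this is the main technical input --- a \emph{refined Doeblin minorization}: there are $\tau > 0$, $\kappa \in (0,1)$ and a probability measure $\nu$ on $\Lambda \times \R^d$ with $P^{\theta(\cdot)}_{s,s+\tau}\bigl((x,v),\cdot\bigr) \ge \kappa\,\nu$ for all $(x,v) \in \Lambda \times K$ and all $s \ge 0$. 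The mechanism: within time $\tau$ at least one reservoir (or thermalization) event occurs with probability bounded below, which resamples $V$ from a Gaussian of full support; the pair $(X,V)$ is then controllable --- $\dot X = V$ with $V$ randomized, plus, in the KFP case, an additional velocity diffusion --- so over the rest of the window the position acquires a density bounded below on the whole torus, producing a positive minorizing kernel. The constants $\tau$ and $\kappa$ degrade with the torus size $L$ (time is needed for typical velocities to traverse $\Lambda$), which is the source of the $L$-dependence of the final rate. Plugging the Lyapunov and Doeblin bounds into the time-inhomogeneous Harris theorem (e.g.\ via a Hairer--Mattingly-type coupling in the weighted total-variation norm $\|\mu\|_W := \int W \,\dd|\mu|$) yields \emph{explicit} $C \ge 1$, $c > 0$ with
\[
\|U_{s,t}\mu - U_{s,t}\nu\|_W \ \le\ C\,e^{-c(t-s)}\,\|\mu - \nu\|_W
\]
for all probability measures $\mu, \nu$ of finite second moment and all $t \ge s \ge 0$. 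Running the same argument on the autonomous semigroup $\bar P_t$ generated by $\mathcal{L}_{T_\infty}$ (the case $\theta \equiv T_\infty$) gives a unique invariant probability density, which is $f_\infty$: indeed $f_\infty$ is $x$-independent and, by Theorem~\ref{ssf} (resp.\ Lemma~\ref{bgksss}), satisfies $G_{T_\infty}[f_\infty] + \sum_j L_j f_\infty = 0$, i.e.\ $\mathcal{L}_{T_\infty} f_\infty = 0$; moreover $\|\bar P_t \mu - f_\infty\|_W \le C e^{-ct}\|\mu - f_\infty\|_W$.

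It remains to absorb the exponentially small mismatch between $\theta(t)$ and $T_\infty$. Write
\[
f(t) - f_\infty \;=\; \bigl(U_{0,t} f(0) - U_{0,t} f_\infty\bigr) \;+\; \bigl(U_{0,t} f_\infty - f_\infty\bigr).
\]
By the contraction the first term is $\le C e^{-ct}\|f(0) - f_\infty\|_W$, and $\|f(0) - f_\infty\|_W < \infty$ precisely because $T_f(0) < \infty$ --- this is where the hypothesis is used. For the second term, the Duhamel identity gives $U_{0,t} f_\infty - f_\infty = \int_0^t \bar P_{t-r}\bigl(\mathcal{L}_{\theta(r)} - \mathcal{L}_{T_\infty}\bigr) U_{0,r} f_\infty \,\dd r$, where $\mathcal{L}_{\theta(r)} - \mathcal{L}_{T_\infty}$ equals $(\theta(r) - T_\infty)\Delta_v$ for (\ref{eq30}) and $\alpha\,\rho_{(\cdot)}\bigl(M_{\theta(r)} - M_{T_\infty}\bigr)$ for (\ref{tbgk22}); in both cases it has $\|\cdot\|_W$-operator norm $O(e^{-\eta r})$ on the smooth, uniformly second-moment-bounded density $U_{0,r} f_\infty$ --- in the KFP case using parabolic smoothing $\|\Delta_v \bar P_s g\|_W \lesssim s^{-1/2}\|g\|_W$, in the BGK case directly since $M_{\theta} - M_{T_\infty}$ is itself a bounded density. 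Combined with the decay of $\bar P_{t-r}$ on mean-zero data this bounds the integral by $C' e^{-c'' t}$ with $c'' = \min\{c, \eta\}$ (up to a polynomial prefactor if $c = \eta$). Since $\|\cdot\|_{L^1(\Lambda \times \R^d)} \le \|\cdot\|_W$, adding the two estimates gives $\int_{\Lambda \times \R^d} |f(x,v,t) - f_\infty(v)|\,\dd x\,\dd v \le C e^{-ct}$ with $C, c$ explicit. (Uniqueness of the NESS follows from the same ingredients: by Theorem~\ref{allunif} any NESS $f_*$ is $x$-independent, by Lemma~\ref{templem2} it has temperature $T_\infty$, so it is a steady state of the autonomous equation and hence equals the unique invariant density $f_\infty$ of $\bar P_t$.)

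I expect the main obstacle to be the refined Doeblin step with \emph{uniform} constants: obtaining a quantitative lower bound on the hypoelliptic transition density of the transport-plus-velocity-mixing dynamics on the torus, uniform in the starting time and with an explicit, controllably $L$-dependent constant, while simultaneously controlling the excursions of the process to large velocities through the Lyapunov function --- this is exactly where the non-compactness of $\R^d$ forces one to go beyond a plain Doeblin condition. By contrast the Harris bookkeeping and the Duhamel perturbation step are routine once these estimates are in place.
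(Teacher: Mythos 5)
Your proposal is correct as a route, and it shares the two essential ingredients with the paper: (i) freezing $T_f(t)=\theta(t)$ to make the equation a linear, time-inhomogeneous Kolmogorov forward equation, and (ii) a Doeblin-type coupling built on the observation that a reservoir interaction completely resamples the velocity from a fixed distribution independent of the starting point, followed by controllability of the hypoelliptic transport-plus-noise dynamics on the torus, and then a Duhamel perturbation to absorb the exponentially small $\theta(t)-T_\infty$ discrepancy. But you have over-engineered one step, and your diagnosis of where the difficulty lies is not the paper's.

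Specifically, you invoke a Foster--Lyapunov drift $\mathcal{L}_\theta^* W\le -\lambda W+b$ to confine the process to a compact set $\Lambda\times K$ before applying a minorization, and you state that ``the non-compactness of $\R^d$ forces one to go beyond a plain Doeblin condition.'' The paper shows this is not needed here: because every reservoir interaction resamples $v$ from the fixed law $\eta^{-1}\sum_j \eta_j m_{T_j}$ independently of the current velocity, and because the position just before that jump is already confined to the compact torus $\Lambda$, the minorization the paper obtains (Theorem~\ref{deficit} via Lemmas~\ref{event}, \ref{cond} and \ref{uniform}) is \emph{uniform over all of $\Lambda\times\R^d$}, not just over a small set. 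In other words, the jump mechanism itself does the work that your Lyapunov function is asked to do, and a plain Doeblin estimate with the conditional-density lower bound of Lemma~\ref{uniform} suffices. This gives cleaner and more readily explicit constants than a Hairer--Mattingly-type weighted-total-variation argument; the latter would also work, and is the more portable tool if one wanted to relax the assumption that the reservoir kick fully resets $v$, but it is heavier than the problem demands.

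A second, smaller difference is in the Duhamel step. You expand $U_{0,t}f_\infty - f_\infty=\int_0^t \bar P_{t-r}(\mathcal{L}_{\theta(r)}-\mathcal{L}_{T_\infty})U_{0,r}f_\infty\,{\rm d}r$ and then need parabolic smoothing to control $\Delta_v U_{0,r}f_\infty$. The paper instead writes the analogous identity with the perturbation hitting the stationary measure $\mu_\star$ directly (display (\ref{koleq6})), so by self-adjointness of $\Delta_v$ the whole estimate reduces to $\|\Delta_v f_\infty\|_{L^1}<\infty$, which is read off from the explicit formula of Theorem~\ref{ssf}. No smoothing estimate for the propagator is needed. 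Your version is not wrong, but the paper's is shorter and keeps all constants explicit.

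Finally, note that the paper does not in fact separately establish a contraction for the autonomous semigroup $\bar P_t$ and then compare; it runs the Doeblin submultiplicativity (Lemma~\ref{doeb}) directly on the time-inhomogeneous family $P_{s,t}$, which already gives $\rho_{t_0,t}\le(1-\delta)^n$, and the role of the limiting generator $L=\mathcal{L}_{T_\infty}$ is only in the Duhamel remainder. Your decomposition $f(t)-f_\infty=(U_{0,t}f(0)-U_{0,t}f_\infty)+(U_{0,t}f_\infty-f_\infty)$ is the same triangle inequality the paper uses in its final display, so the bookkeeping at the end matches.
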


Theorem~\ref{approach} is proved in Section \ref{sect4}.
We emphasize that the method provides an explicitly computable bound on the rate of convergence that is likely to be useful elsewhere. 

The proof of Theorem~\ref{allunif} is much simpler than that of Theorem~\ref{approach}. By an ergodicity argument, all we need to do is to find {\em one} spatially homogeneous steady state. Ergodicity then implies that there are no others. In Section 2 we find the spatially homogeneous steady states for (\ref{eq30}) or (\ref{tbgk22}). Granted that these exist, we may prove Theorem~\ref{allunif} as follows:

\begin{proof}[Proof of Theorem~\ref{allunif}] By the convergence of $T_f(t)$ to $T_\infty$, any steady state solution of (\ref{eq30}) must also be a steady state solution of
\begin{equation}\label{eq4}
\frac{\partial}{\partial t} f(x,v,t)  +{\rm div}_x(vf(x,v,t)    = G_{T_\infty}[f](x,v,t)  + \sum_{j=1}^k L_j f(x,v,t)\ .            
\end{equation}
In Theorem~\ref{ssf} below we find the explicit form of the unique {\em spatially homogeneous} stationary state of this equation.  It remains to show that
there are no other stationary states of any kind. However, this is an immediate consequence of the fact that 
(\ref{eq4}) is the forward equation of an ergodic process. Lower bounds for the transition kernel $p_t((x,v),(y,w))$ that display this ergodicity will be given below in Section 3.1.  The same reasoning aplies  to the BGK model.
\end{proof}

\section{Explicit form of the spatially homogeneous NESS}\label{sect2}

If the initial data  $f_0$ for either (\ref{tkfp}) or (\ref{tbgk})  is spatially homogenous; 
i.e., translation invariant on $\Lambda$, then the solution $f(x,v,t)$ of (\ref{kfp}) or (\ref{tbgk})
will be spatially homogeneous for all $t$, 
so that $f(x,v,t) = |\Lambda|^{-1}g(v,t)$ for a time dependent probability density $g(v,t)$. In this case,
\begin{equation}
M_f(x,v,t)  = M_{T_f(t)} \quad{\rm with}\quad T_g(t) = d^{-1}\int_{\R^d}|v|^2g(v,t){\rm d}v\ .
\end{equation} 

It is simplest to write down the stationary states for (\ref{tbgk}). Under the assumptions of stationarity and spatial homogeneity,
(\ref{tbgk}) reduces to
$$
[M_{T_\infty} -g] + \sum_{j-1}^k \eta_j[ M_{t_j} - g] = 0
$$
which immediately yields
\begin{lm}\label{bgksss} The unique normalized spatially homogeneous steady state solution of (\ref{tbgk}), $g_\infty(v)$, is given by
\begin{equation}\label{bgkness}
g_\infty(v) = |\Lambda|^{-1}\left(1+ \sum_{j=1}^k \eta_j\right)^{-1}\left(M_{T_\infty} +  \sum_{j=1}^k \eta_j M_{T_j}\right)\ 
\end{equation}
where $T_\infty$ is given by (\ref{Tinf20}).
\end{lm}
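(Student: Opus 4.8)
The plan is to reduce the problem to a single linear identity holding pointwise in $v$. First I would record that equation (\ref{tbgk}) is invariant under translations of $\Lambda$, so a spatially homogeneous initial datum produces a spatially homogeneous solution; write it as $f(x,v,t) = |\Lambda|^{-1} g(v,t)$ with $g$ a probability density on $\R^d$. Then the transport term ${\rm div}_x(vf)$ vanishes identically, and since $\rho(x,t)\equiv |\Lambda|^{-1}$ and (by (\ref{zeromom})) $u(x,t)\equiv 0$, the local Maxwellian is simply $M_f(x,v,t) = |\Lambda|^{-1} m_{T_g(t)}(v)$ with $T_g(t) = d^{-1}\int_{\R^d}|v|^2 g(v,t)\dd v$. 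Imposing stationarity and cancelling the common factor $|\Lambda|^{-1}$ (with the BGK rate normalized to $1$, as in the displayed reduction) collapses (\ref{tbgk}) to $[m_{T_g} - g] + \sum_{j=1}^k \eta_j [m_{T_j} - g] = 0$.

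The one step that matters is to eliminate the superficial nonlinearity, namely the appearance of $T_g$ inside $m_{T_g}$. This is where Lemma~\ref{templem2} enters: it gives $T_g(t) = T_\infty + e^{-t\eta}(T_g(0) - T_\infty)$, which is constant in $t$ only if $T_g(0) = T_\infty$. Hence every stationary spatially homogeneous state has $T_g = T_\infty$, the stationary identity becomes genuinely linear in $g$, and solving it pointwise in $v$ yields
$$g_\infty(v) = \left(1 + \sum_{j=1}^k \eta_j\right)^{-1}\left(m_{T_\infty}(v) + \sum_{j=1}^k \eta_j m_{T_j}(v)\right),$$
which is the asserted formula (\ref{bgkness}) (reading $M_T = |\Lambda|^{-1}m_T$ in the statement). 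Uniqueness among stationary spatially homogeneous states is then automatic, since a linear equation with a nonzero coefficient on $g$ has only this solution.

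Finally I would carry out the two consistency checks. Integrating $g_\infty$ over $\R^d$ and using that each $m_T$ is a probability density gives $(1+\sum_j\eta_j)^{-1}(1 + \sum_j \eta_j) = 1$, so $g_\infty$ is normalized. For self-consistency of the temperature, $d^{-1}\int_{\R^d}|v|^2 m_T(v)\dd v = T$ gives $T_{g_\infty} = (1+\sum_j\eta_j)^{-1}(T_\infty + \sum_j\eta_j T_j)$, and the definition (\ref{Tinf}) of $T_\infty$, i.e.\ $\sum_j\eta_j T_j = \eta T_\infty$, makes this equal to $(1+\eta)^{-1}(1+\eta)T_\infty = T_\infty$, as required. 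So $g_\infty$ is indeed a steady state, it is normalized, and it is the only spatially homogeneous one. I do not expect a genuine obstacle here: the only thing one must not skip is invoking Lemma~\ref{templem2} before solving, and then verifying the temperature identity at the end so that the ansatz closes on itself.
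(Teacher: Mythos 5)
Your proof is correct and follows the same route as the paper: invoke Lemma~\ref{templem2} to fix $T_g = T_\infty$ for a stationary spatially homogeneous state, reduce the BGK equation to a pointwise linear identity in $v$, and solve; the paper compresses this into a single displayed equation with $T_\infty$ already substituted, so you have only made explicit (together with the normalization and temperature-consistency checks) what the paper leaves implicit. Your parenthetical about normalization is also apt: as printed, formula (\ref{bgkness}) carries the factor $|\Lambda|^{-1}$ twice, once explicitly and once hidden inside each $M_T = |\Lambda|^{-1}m_T$, so reading the $M_T$'s there as $m_T$'s (equivalently, dropping the outer $|\Lambda|^{-1}$) gives the correctly normalized density.
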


An explicit formula can also be given in the kinetic Fokker-Planck case, but its derivation is not so immediate. 
Under the assumptions of stationarity and spatial homogeneity,  (\ref{tkfp})  reduces to 
\begin{equation}\label{eqred}
\frac{\partial}{\partial t} g(v,t)    =
 G_{T_g(t)} g(v,t) +  \sum_{j=1}^k L_j g(v,t)  
\end{equation} 
where for $T>0$, the operator $G_T$ is given by (\ref{FPdef}). 

By Lemma~\ref{templem2},   any steady state solution $g_\infty$ of (\ref{eqred}) must satisfy
\begin{equation}\label{eqred2}
 G_{T_\infty } g_\infty(v) +  \sum_{j=1}^k L_j g_\infty(v)   = 0
\end{equation} 
where $T_\infty$ is given by (\ref{Tinf}).

The equation
\begin{equation}\label{eq4xx}
\frac{\partial}{\partial t} g(v,t)    = G_{T_\infty}[g](v,t)  + \sum_{j=1}^k L_j g(v,t)\ ,          
\end{equation}
 is the forward Kolmogorov equation of a stochastic process $v_t$ in $\R^d$ that has the following description:
At random times $t$, arriving in a Poisson stream with rate $\eta$, there are interactions with  the $k$ reservoirs.
When an interaction occurs, an index $j\in \{1,\dots,k\}$ is chosen with probability $\eta_j/\eta$, and then
$v_t$ jumps to a new point chosen from the distribution $M_{T_j}$.  Between interactions with the reservoir, the particle diffuses, governed by the SDE
\begin{equation}\label{SDE}
{\rm d}v_t = -v_t{\rm d}t  + \sqrt{2T_\infty }{\rm d}w_t
\end{equation}
where $w_t$ is a standard Brownian motion. 

Consider a large time $t_1$, so that with very high probability, there has been at least one collision. 
Almost surely, there are at most finitely many collisions. Let $\hat t_1$ be the {\em last} collision before time $t_1$. 
Then at time $(\hat t_1)_+$, supposing  the $j$-th reservoir is selected for the interaction, the conditional distribution is given by $M_{T_j}$. If one starts a solution of the SDE  (\ref{SDE}) with initial distribution $M_{T_j}$, at each later time
the distribution is $M_T$ for some $T$ in between $T_j$ and $T_\infty$.   This heuristic suggest that 
the invariant measure of the process governed by (\ref{eq4xx}) is a convex combination of Gaussians $M_T$ 
with $T \in [T_1,T_k]$, where we have assumed, without loss of generality that $T_1 < T_2 < \cdots < T_k$. 
More precisely, we expect an invariant density $f_\infty$ (\ref{eq4xx})  of the form
\begin{equation}\label{eq4f}
f_\infty(v) = |\Lambda|^{-1}\int_{[T_1,T_k]} M_T(v){\rm d}\nu(T)\             
\end{equation}

Of course, if $k>2$ and $T_j = T_\infty$ for some $T_j\in (T_1,T_2)$, one can expect $\nu$ to have a point mass
at $T_j = T_\infty$, since if at the last interaction with the reservoirs the particle has jumped to a point distributed according to $M_{T_\infty}$, the diffusion does not change this distribution. Otherwise, we expect $\nu$ to be absolutely continuous,
so that for some probability density  $w(T)$ on $[T_1,T_k]$, 
\begin{equation}\label{eq4fv}
{\rm d}\nu(T) = w(T){\rm d}T\ .        
\end{equation}

One has to be careful about conditioning a Markov process on a random time that depends on future events (such as the time of the {\em last} interaction with the reservoirs before time $t_1$). The heuristic argument  put forward can be made rigorous in several ways, using the fact that the Poisson stream of interaction times is independent of the diffusion process. 
However, since we need the explicit form of the probability measure $\nu$ in what follows, it is simplest to treat (\ref{eq4f}) as an ansatz,
and to derive the form of $\nu$. The next theorem gives the explicit form of the unique steady state $g_\infty$ of (\ref{eq4xx}) for $k=2$. 

\begin{thm}[Steady state formula]\label{ssf}  Suppose $k=2$ with $T_1 < T_2$, and $\eta_1,\eta_2> 0$. 
Then there is a unique steady state solution $g_\infty$ of (\ref{eq4xx}) which is given by (\ref{eq4f})  and (\ref{eq4fv}) where
\begin{equation}\label{eq4fss}
w(T)  = \begin{cases} {\displaystyle  \frac{\eta_1}{2(T_\infty -T_1)^{\eta/2}}(T_\infty - T)^{\eta/2-1}} & T\in [T_1,T_\infty)\\
 {\displaystyle  \frac{\eta_2}{2(T_2 -T_\infty )^{\eta/2}} (T-T_\infty)^{\eta/2-1} }& T\in (T_\infty,T_2]\end{cases}\ .      
\end{equation}
\end{thm}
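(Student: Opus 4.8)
The plan is to verify directly that the density $f_\infty$ defined by (\ref{eq4f}), (\ref{eq4fv}), (\ref{eq4fss}) satisfies the stationary equation (\ref{eqred2}) with $k=2$, namely $G_{T_\infty}g_\infty + \eta_1 L_1 g_\infty + \eta_2 L_2 g_\infty = 0$; uniqueness then follows from the ergodicity of the process generated by (\ref{eq4xx}), whose transition kernel lower bounds are announced for Section 3.1. The starting point is the elementary identity $G_{T_\infty} M_T = (T_\infty - T)\,\partial_T M_T$, which one checks by differentiating the Gaussian $M_T(v) = (2\pi T)^{-d/2}e^{-|v|^2/2T}$ in $T$ and comparing with $T_\infty \Delta_v M_T + \mathrm{div}_v(vM_T)$; this is the single computation that makes the whole ansatz work. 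Since each $M_T$ has $\rho=1$ (as a density on $\R^d$, absorbing $|\Lambda|^{-1}$), one also has $L_j M_T = \eta_j(M_{T_j} - M_T)$ with the hydrodynamic-moment term $\rho_f = 1$, so everything reduces to a relation among Gaussians integrated against $\nu$.

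Substituting the ansatz $f_\infty = \int M_T\,\mathrm{d}\nu(T)$ and using the identity above, the stationary equation becomes
\begin{equation}\label{planeq}
\int_{[T_1,T_2]} (T_\infty - T)\,\partial_T M_T \,\mathrm{d}\nu(T) \;=\; \eta_1\Big(\int M_T\,\mathrm{d}\nu(T) - M_{T_1}\Big) + \eta_2\Big(\int M_T\,\mathrm{d}\nu(T) - M_{T_2}\Big),
\end{equation}
where I have used $\eta = \eta_1+\eta_2$ and the total mass $1$ of $\nu$. The left side I integrate by parts in $T$: with $\mathrm{d}\nu = w(T)\,\mathrm{d}T$ on each of the two intervals, $\int (T_\infty-T)w(T)\partial_T M_T\,\mathrm{d}T = -\int M_T \,\partial_T\big[(T_\infty-T)w(T)\big]\,\mathrm{d}T + \text{boundary terms}$. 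The plan is to choose $w$ so that $\partial_T[(T_\infty-T)w(T)] = -\eta\, w(T)$ on each open interval — this is exactly the ODE $ (T_\infty - T)w'(T) - w(T) = -\eta w(T)$, i.e. $w'/w = (\eta-1)/(T_\infty - T)$ on $[T_1,T_\infty)$, whose solution is the power law $(T_\infty - T)^{\eta/2-1}$ appearing in (\ref{eq4fss}) (after matching the exponent; note $\eta/2-1 = (\eta-1)/2 - 1/2$, so one should track the factor of $2$ carefully, presumably from the $2T$ in the Gaussian variance). With this choice the bulk terms on the left of (\ref{planeq}) combine with $-\eta \int M_T\,\mathrm{d}\nu$ on the right to cancel, and one is left to match the boundary terms from integration by parts against $-\eta_1 M_{T_1} - \eta_2 M_{T_2}$. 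At $T=T_\infty$ the factor $(T_\infty - T)$ kills the boundary contribution from each side (this is why the potential point mass at $T_\infty$ is invisible here and why the two pieces decouple); at $T=T_1$ and $T=T_2$ the boundary terms are $\pm(T_\infty - T_1)w(T_1^+)M_{T_1}$ and $\mp(T_2 - T_\infty)w(T_2^-)M_{T_2}$, and the constants $\eta_1/[2(T_\infty-T_1)^{\eta/2}]$, $\eta_2/[2(T_2-T_\infty)^{\eta/2}]$ in (\ref{eq4fss}) are precisely what is needed to make these equal $\eta_1 M_{T_1}$ and $\eta_2 M_{T_2}$ respectively. One then separately checks that $w$ integrates to $1$, confirming $f_\infty$ is a probability density; this is a one-line Beta-type integral.

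The main obstacle, such as it is, is bookkeeping rather than conceptual: getting the factor-of-two in the exponent and the normalizing constants exactly right, and being careful that the integration-by-parts boundary terms at $T_\infty$ genuinely vanish (so that the solution need not be continuous there, and one is not secretly imposing a spurious matching condition). A secondary point worth a remark is that the derivation as presented only shows $f_\infty$ \emph{is} a steady state; uniqueness among spatially homogeneous steady states must be invoked from the ergodicity of (\ref{eq4xx}) — equivalently, one can note that any homogeneous steady state must satisfy the same linear equation (\ref{eqred2}) and that the generator $G_{T_\infty} + \sum_j \eta_j L_j$ has a one-dimensional kernel on probability densities, which again follows from the Doeblin-type lower bounds promised in Section 3.1. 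No genuinely hard analysis is required for this theorem; the content is entirely in the explicit identity $G_{T_\infty}M_T = (T_\infty - T)\partial_T M_T$ and the resulting first-order ODE for $w$.
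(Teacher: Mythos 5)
Your proposal follows exactly the paper's route: the ansatz $g_\infty=\int M_T\,\mathrm{d}\nu(T)$, an identity expressing $G_{T_\infty}M_T$ as a $T$-derivative of $M_T$, integration by parts in $T$, a first-order ODE for $w$ that kills the bulk term, and boundary-term matching at $T_1,T_2$ (with the boundary at $T_\infty$ vanishing because of the $(T_\infty-T)$ factor); uniqueness is invoked from ergodicity, as in the paper. So the approach is the same.

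The factor-of-$2$ you flag but do not resolve is a genuine gap, and it is easy to nail down: a direct computation gives $\Delta_v M_T=\bigl(\tfrac{|v|^2}{T^2}-\tfrac{d}{T}\bigr)M_T$ and $\partial_T M_T=\tfrac12\bigl(\tfrac{|v|^2}{T^2}-\tfrac{d}{T}\bigr)M_T$, so $\Delta_v M_T=2\,\partial_T M_T$, and hence the correct identity is
\[
G_{T_\infty}M_T=(T_\infty-T)\Delta_v M_T=2(T_\infty-T)\,\partial_T M_T,
\]
not $G_{T_\infty}M_T=(T_\infty-T)\partial_T M_T$ as you wrote. Without the $2$, your bulk ODE $\partial_T[(T_\infty-T)w]=-\eta w$ integrates to $w\propto(T_\infty-T)^{\eta-1}$, not the $(T_\infty-T)^{\eta/2-1}$ of \eqref{eq4fss}; the normalization constants then come out wrong too. (There is also a small sign slip in your displayed ODE: from $\partial_T[(T_\infty-T)w]=-\eta w$ one gets $w'/w=(1-\eta)/(T_\infty-T)$, not $(\eta-1)/(T_\infty-T)$.) With the factor of $2$ restored the bulk condition becomes $2\,\partial_T[(T_\infty-T)w]+\eta w=0$, i.e. $w'/w=(1-\eta/2)/(T_\infty-T)$, which yields $w\propto(T_\infty-T)^{\eta/2-1}$ on $[T_1,T_\infty)$ and likewise on $(T_\infty,T_2]$; the boundary terms then read $2(T_\infty-T_1)w(T_1^+)=\eta_1$ and $2(T_2-T_\infty)w(T_2^-)=\eta_2$, giving exactly the constants in \eqref{eq4fss}. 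In short: your plan is the paper's plan, and the single computation you left as ``track the factor of $2$'' is precisely $\Delta_v M_T=2\,\partial_T M_T$.
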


\begin{proof} Let 
\begin{equation}\label{sss0}
g(v) := \int_{T_1}^{T_2} w(T)M_T{\rm d}T
\end{equation}
 where $w(T)$ is a probability density on $[T_1,T_2]$.
If $g$ is to be a steady state solution of (\ref{eq4xx}), then we must have
\begin{equation}\label{sss1}
G_{T_\infty} g + \eta_1 M_{T_1}  + \eta_2 M_{T_2} - \eta g = 0\  .
\end{equation}
Note that 
$$G_{T_\infty}M_T = G_{T}M_T + (T_\infty -T)\Delta M_T = (T_\infty -T)\Delta M_T \ .$$
Then since
$$\frac{\partial}{\partial T} M_T(v) = 2\Delta M_T(v)\ ,$$
\begin{eqnarray}\label{sss2}
G_{T_\infty} g  &=& 2\int_{T_1}^{T_2} (T_\infty - T)w(T) \frac{\partial}{\partial T} M_T {\rm d} T\nonumber\\
&=&  2\int_{T_1}^{T_\infty} (T_\infty - T)w(T) \frac{\partial}{\partial T} M_T {\rm d} T\nonumber\\
&+& 2\int_{T_\infty }^{T_2} (T_\infty - T)w(T) \frac{\partial}{\partial T} M_T {\rm d} T\ .
\end{eqnarray}
Integrating by parts, 
\begin{multline}
\int_{T_1}^{T_\infty} (T_\infty - T)w(T) \frac{\partial}{\partial T} M_T {\rm d} T = \\
-(T_\infty - T_1)w(T_1)M_{T_1} - \int_{T_1}^{T_\infty} \left( \frac{\partial}{\partial T} [(T_\infty - T)w(T)]\right)M_T {\rm d} T
\ .\end{multline}
Making a similar integration by parts in the last integral in (\ref{sss2}), we obtain 
\begin{multline}\label{sss5}  
 G_{T_\infty} g   =  - 2(T_\infty -T_1)w(T_1)M_{T_1}  - 2(T_2 - T_\infty )w(T_2)M_{T_2}  + \\
 2\int_{T_1}^{T_2}  \left( \frac{\partial}{\partial T} [(T_\infty - T)w(T)]\right)M_T {\rm d} T\ .
\end{multline}
Then  (\ref{sss1}) implies that
\begin{multline}\label{sss6}
\left(\eta_1 - 2(T_\infty -T_1)w(T_1)\right)M_{T_1} + 
\left(\eta_2 - 2(T_2 - T_\infty )w(T_2)\right)M_{T_2}  + \\
 \int_{T_1}^{T_2}  \left( 2\frac{\partial}{\partial T} [(T_\infty - T)w(T)] - \eta w(T) \right)M_T {\rm d} T  = 0 \ .
\end{multline}
For $c_1,c_2>0$ to be determined, suppose
\begin{equation}\label{eq4fsst}
w(T)  = \begin{cases}  c_1(T_\infty - T)^{\eta/2-1} & T\in [T_1,T_\infty)\\
  c_2 (T-T_\infty)^{\eta/2-1} & T\in (T_\infty,T_2]\end{cases}\ .      
\end{equation}
Then
$$ 2\frac{\partial}{\partial T} [(T_\infty - T)w(T)] - \eta w(T) =0$$
everywhere in $[T_1,T_\infty)\cup  (T_\infty,T_2]$, and so defining $w(T_\infty) =0$, for example, 
(\ref{sss6}) reduces to
$$
\left(\eta_1 - 2c_1(T_\infty -T_1)^{\eta/2}\right)M_{T_1} + 
\left(\eta_2 - 2c_2(T_2 - T_\infty )^{\eta/2}\right)M_{T_2}   =0\ .$$
This is satisfied for all $v$ if and only if
$$c_1 = \frac{\eta_1}{2(T_\infty -T_1)^{\eta/2}} \qquad{\rm and} \qquad 
c_2 = \frac{\eta_2}{2(T_2 -T_\infty )^{\eta/2}} \ ,$$
which yields (\ref{eq4fss}). 
The uniqueness is an immediate consequence of the fact that our equations is the forwards equation of an ergodic process; details are given in the proof of Theorem~\ref{approach} below which gives an even stronger uniqueness result. 
\end{proof}
Notice that when $w(T)$ is given by (\ref{eq4fss}), then  indeed 
$$\int_{T_1}^{T_2}w(T){\rm d}T = 1$$
for all $\eta>0$.  If we vary $\eta>0$ but keep
$$p_1 = \frac{\eta_1}{\eta} \qquad{\rm and}\qquad  p_2 = \frac{\eta_2}{\eta}$$
fixed, then  it is easy to see that
$$\lim_{\eta\to 0}w(T){\rm d}T = \delta_{T_\infty }\qquad{\rm and}\qquad  
\lim_{\eta\to \infty}w(T){\rm d}T =p_1 \delta_{T_1} + p_2\delta_{T_2}\ .$$

An entirely analogous analysis applies in the case $k > 2$, except that one must then take into account
the possibility that $T_j = T_\infty$ for some $j$. This simply introduces a point mass at $T_\infty$ into $\nu$.

\section{Approach to the NESS}\label{sect4}

We continue our study of (\ref{eq30}) and (\ref{tbgk22}), and prove Theorem~\ref{approach}.

The starting point for the  proof of Theorem~\ref{approach} is based on the elementary  fact, 
proved  in Lemma~\ref{templem2}, that
\begin{equation}\label{tfor}
T_f(t) = T_\infty + e^{-t\eta}(T_{f}(0) - T_\infty)\ ,
\end{equation} 
and that equation (\ref{eq30}) is the Kolmogorov forward equation of a stochastic process with a time dependent generator,
 but one in which the time dependence damps out exponentially fast due to (\ref{tfor}).  
 Exploiting this and making use of  a 
 variant Doeblin's Theorem to control the rate at which
memory of the initial data is lost is the basis of the proof that follows. The basic stategy is the one introduced in \cite{BL}, though here we use quantitative estimates in place of compactness arguments.
We only give the details in the case of kinetic 
Fokker-Planck equation (\ref{eq30}); the other case is similar but simpler. 

We preface the proof itself with a few further remarks on the strategy. On account of (\ref{tfor}), (\ref{eq30}) can be rewritten as
\begin{multline}\label{eq3X}
\frac{\partial}{\partial t} f(x,v,t)    +{\rm div}_x(vf(x,v,t) = G_{T_\infty + e^{-t\eta}(T_{f}(0) - T_\infty)}[f](x,v,t)\\  + \sum_{j=1}^k L_j f(x,v,t)\ .      
\end{multline}
For $t>s$, let $\widetilde f(x,v,t)$ be the solution of this equation started at time $s$ with the data $\widetilde f(x,v,s) = f_\infty(v)$. 
Since $T_\infty + e^{-t\eta}(T_{f}(0) - T_\infty) \neq T_\infty$ (except in the trivial case $T_f(0) = T_\infty$), $\widetilde f(x,v,t)$
is not independent of $t$. However,  an argument using Duhamel's formula, (\ref{tfor}) and the regularity of $f_\infty$ that is provided by 
Theorem~\ref{ssf} shows that the $L^1(\Lambda\times\R^d)$ distance
between $\widetilde f(x,v,t)$ and $f_\infty(v)$  is bounded by a fixed multiple of  $e^{-s}$ for all $t > s$. 
Since a variant of Doeblin's Theorem may be applied to show that memory of initial data is lost at an exponential rate, for
$t$ much larger than $s$, there will only be a small difference between $f(x,v,t)$ and $\widetilde f(x,v,t)$, and hence only a small difference between $f(x,v,t)$ and $f_\infty(v)$. Choosing $s=t/2$ for large $t$ then gives us the bound we seek. 

We break the proof into several lemmas, after fixing notation. 
 First, pick some large value $t_0$, to be specified later, but for now, take large to mean that $T_f(t_0)$ is very close to $T_\infty$. Consider the  stochastic process governed by (\ref{eq3X})
and started at time $s \geq t_0$ at the phase-space point $(x_0,v_0)\in \Lambda \times \R^d$ with probability $1$.  (A more detailed description of the process is provided below.) Let 
$\mathbb{P}_{s, (x_0,v_0)}$ be the law; i.e., the path-space measure of the stochastic process.  For $t > s$, and measurable
$A \subset \Lambda\times \R^d$, we are interested in the transition probabilities
\begin{equation}\label{transit}
P_{s,t}((x_0,v_0),A)  = \mathbb{P}_{s, (x_0,v_0)}(\{(x_t,v_t)\in A\})
\end{equation}
as a function of $(x_0,v_0)$, and aim to prove that the memory of  $(x_0,v_0)$ is lost at an exponential rate.  
The stochastic process consists of a Poisson stream of interactions with the thermal reservoirs, and an independent 
degenerate diffusion process between these interactions.  At the moment just after the first jump occurs,  the velocity is in a Maxwellian distribution corresponding to one of the reservoirs. Untill the next jump occurs, the particle excutes a degenerate diffusion whose transition probabilities may be explicitly estimated. If we condition on exacrtly one jump occuring in the first part of a fixed time interval, we obtain lower bounds of the particle to be at any point of the phase space. These lower bounds may then be exploints in a quantitative coupling argument based on Doeblin's Theorem. 
The next lemma  is the first step in the conditioning argument conditioning argument.

\begin{lm}\label{event}
Consider the event $E$  that in this stochastic process
there is exactly one interaction with the reservoirs in the time interval $(t_0,t_0+1/\eta]$, and none in the interval
$(t_0+1/\eta,t_0+2/\eta]$.  Then, independent of the initial data $(x_0,v_0) \in \Lambda\times \R^d$ at time $t_0$
$$\mathbb{P}_{t_0, (x_0,v_0)}(E) = e^{-2}\ .$$
\end{lm}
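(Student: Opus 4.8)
The plan is to compute the probability of the event $E$ directly from the defining properties of the Poisson stream of reservoir interactions, using the fact — emphasized in the text — that this stream is a genuine Poisson process with constant rate $\eta$, independent of the diffusion and independent of the starting phase-space point $(x_0,v_0)$.

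First I would recall that the number of interactions in any interval of length $\ell$ is Poisson distributed with parameter $\eta\ell$, and that the numbers of interactions in disjoint intervals are independent. The event $E$ asks for exactly one interaction in $(t_0,t_0+1/\eta]$ (an interval of length $1/\eta$, hence Poisson parameter $\eta\cdot(1/\eta)=1$) and zero interactions in $(t_0+1/\eta,t_0+2/\eta]$ (again length $1/\eta$, Poisson parameter $1$). By independence of the counts on these two disjoint intervals,
$$
\mathbb{P}_{t_0,(x_0,v_0)}(E) = \mathbb{P}(\text{exactly one in first interval})\cdot\mathbb{P}(\text{none in second interval}) = \left(\frac{1^1 e^{-1}}{1!}\right)\cdot e^{-1} = e^{-2}\ .
$$
Since nothing in this computation refers to $(x_0,v_0)$ — the jump times form a Poisson stream whose law does not depend on the spatial-velocity trajectory — the identity holds uniformly in the initial data, which is exactly the assertion.

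There is essentially no obstacle here: the only point requiring a word of care is the justification that the interaction times really do form a Poisson process with rate $\eta=\sum_j\eta_j$ that is independent of the diffusion, i.e. that superposing the $k$ independent reservoir streams (the $j$-th with rate $\eta_j$) yields a single Poisson stream of rate $\eta$, with the reservoir label at each jump drawn independently with probability $\eta_j/\eta$. This is the standard superposition/thinning property of Poisson processes and matches the description of the model given just before the lemma. Once that is in place, the factorization over the two disjoint time windows and the evaluation of the two Poisson probabilities is immediate, and the proof is complete.
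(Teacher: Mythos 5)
Your proof is correct. It takes a slightly different, and arguably cleaner, route than the paper's: you compute the probability via the Poisson \emph{counts} on the two disjoint windows of length $1/\eta$, namely $\mathbb{P}(N_1=1)\,\mathbb{P}(N_2=0)=e^{-1}\cdot e^{-1}=e^{-2}$, whereas the paper integrates over the time $s$ of the unique jump, writing the probability as $\int_{t_0}^{t_0+1/\eta}\eta\,e^{-\eta(s-t_0)}\,e^{-\eta(t_0+2/\eta-s)}\,{\rm d}s=e^{-2}$ (the exponent in the text has a small typo but the intended integral is this one). The paper's parametrization by the jump time $s$ is not gratuitous — $s$ is precisely the conditioning variable used in the very next steps (equations such as \eqref{rep}), where one disintegrates over the jump time and the post-jump velocity — so it sets up the representation needed later. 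Your counting argument is shorter and makes the independence from $(x_0,v_0)$ even more transparent, but it does not hand you the jump-time density that the subsequent analysis requires, so if you were writing the full section you would still need to introduce that density separately. Your observation that one must justify the superposition of the $k$ independent reservoir streams into a single rate-$\eta$ Poisson process, with label drawn as $\eta_j/\eta$, is a good point of care and is consistent with the model description preceding the lemma.
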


\begin{proof}
Since the interaction with the reservoirs occur in a 
Poisson stream with rate $\eta$, the probability of the event $E$  is
${\displaystyle \int_{t_0}^{t_0+1/\eta} \eta e^{-s\eta} e^{(s-2)\eta}{\rm d}s = e^{-2}}$.
\end{proof} 

When any interaction with the reservoirs takes place, the  velocity before the interaction  is replaced with a new velocity
chosen according the the distribution $\eta^{-1}[\eta_1M_1(v) + \eta_2M_2(v)]$, independent of what the velocity was before. 

Between interactions with the reservoirs, the motion of the particle is governed by the stochastic differential equation
\begin{eqnarray}\label{SDE1}
{\rm d} v_t &=& -v{\rm d}t  + \sqrt{2T(t)} {\rm d}w_t\nonumber \\
{\rm d}x_t &=& \phantom{-}v_t{\rm d}t 
\end{eqnarray}
where $w(t)$ is a standard Wiener process, and $T(t) = T_\infty + e^{-t\eta}(T_{f}(0) - T_\infty)$. 

We compute the distribution of $(x,v)$ at time $t_0+2/\eta$ conditional on the even $E$ defined above. 
Let $b$ denote the new velocity after the interaction with the reservoir at time $s$, and let $a$ denote the position $x_s$.
Then the distribution we seek is the distribution of $(x_{t_0+2/\eta},v_{t_0+2/\eta})$ for the solution of (\ref{SDE1}) started at
$(x_s,v_s) = (a,b)$ with probability $1$, averaged in $a$ over the distribution of $x_s$ and averaged in $b$ over the distribution
$\eta^{-1}[\eta_1M_1(v) + \eta_2M_2(v)]$.

We wish to determine the dependence of 
\begin{equation}\label{object}
{\mathbb{P}_{t_0, (x_0,v_0)}(\{(x_t,v_t)\in A\} | E ) }
\end{equation}
on $(x_0,v_0)$.  This has the following structure. Let $\mu_{(x_0,v_0),t_0,s}$, be the  
probability measure on $\Lambda$ given by
$$
\mu_{(x_0,v_0),t_0,s}(B) = \mathbb{P}_{t_0, (x_0,v_0)}(\{x_s\in B\}|E )\ ,$$
 be the law of $x_s$ at the time of
the single collision in the  interval $(t_0, t_0+1/\eta]$. Define $u = t_0+2/\eta -s$, and let $p_{b,u}(v)$ be the density for $v_{s+u}$ where $v_t$ 
is given by  (\ref{SDE1}) with $v_s = b$ with probability $1$. An explicit expression will be obtained below,
 but all that matters for us at present is that this is independent of $a$, $x_0$ and $v_0$.

 For fixed 
 $a$ and $b$  let $p_{a,b,s,s+u}(x,v)$ be the probability density of $(x_{s+u},v_{s+u})$ for the solution of (\ref{SDE1})
started at $(x_s,v_s) = (a,b)$ with probability $1$.  We shall derive 
an explicit formula for this below. Our focus will be on the conditional probability density for $x_{x+u}$ given $v_{s+u}$. 
$$p_{a,b,s,s+u}(x|v)  =\frac{p _{a,b,s,s+u}(x,v)}{p_{b,u}(v)}\ .$$

Now define the probability measure $\nu$ on $\R^d$ by
$${\rm d}\nu(b)  =  \frac{1}{\eta}[\eta_1M_1(b) + \eta_2M_2(b)]{\rm d}b\ ,$$
which is the law of the velocity immediately following an interaction with the reservoirs. Putting the components together,
the probability in (\ref{object}) is obtained by integrating
\begin{multline}\label{rep}
h_{(x_0,v_0)}(x,v) :=\\e^2\eta \int_{t_0}^{t_0+1/\eta} \int_{\R^d}\left[ \int_\Lambda[p_{a,b,s,s+u}(x| v) ] \mu_{(x_0,v_0),t_0,s}({\rm d}a ) \right] 
p_{b,u}(v) {\rm d}\nu(b) {\rm d}s
\end{multline}
over the set $A$. 
We shall obtain control   over the $(x_0,v_0)$ dependence in (\ref{object}) by quantifying the rate of coupling as follows: Consider
two copies of the process, started from $(x_0,v_0)$ and $(x_1,v_1)$ respectively. After the first interaction with the reservoir, the velocity variable jumps to a new velocity chosen independent of the starting point, and thus we obtain perfect coupling of the velocities at this time. It remains to consider the spatial coupling. To do this, we get bounds on the conditional spatial density, conditioning on the new velocity after the collision and the time of the collision. We aim to show that these conditional spatial  densities all dominate a fixed multiple of one another. From this we get a fixed minimum amount of cancelation (the effect of coupling) when subtracting transition probabilities for any two phase space starting points.  The following lemma allows us to combine velocity coupling and spatial coupling to get phase space coupling. 

\begin{lm}\label{cond}  Let $(X\times Y \times Z, \mathcal{F}_X\otimes \mathcal{F}_Y\otimes \mathcal{F}_Z)$ be a product of three measure
spaces. Let $\rho_1$ and $\rho_2$ be two probability measures on this measure space. Suppose that their marginal distributions 
on $Y\times Z$ have the following form:  There is a fixed probability measure $\nu$ on  $(Z,\mathcal{F}_Z)$ and two probability measures $\mu_1,\mu_2$ on $(Y,\mathcal{F}_Y)$ such that for all 
$B\in \mathcal{F}_Y\otimes \mathcal{F}_Z$,
$$\rho_j(X \times B) = \mu_j\otimes  \nu(B)\ .$$

Suppose also that $\rho_1$ and $\rho_2$ possess proper conditional probabilities for $Y,Z$, so that there is a representation
$$\rho_j(C) = \int_Z \rho_j(C|y,z){\rm d}\mu_j(y)\otimes {\rm d}\nu(z)\ $$
valid for all $C \in \mathcal{F}_X \otimes \mathcal{F}_Y \otimes \mathcal{F}_Z$. 
Suppose finally that there exists a constant $0 < c < \infty$ such that for all $y,y'\in Y$, all $z\in Z$, and all $A\in \mathcal{F}_X$
\begin{equation}\label{object3}
c \rho_1(A|y,z) \leq \rho_2(A|y',z) \leq \frac{1}{c} \rho_1(A|y,z) \ .
\end{equation}
Then
\begin{equation}\label{object2}
\sup\{ |\rho_1(C) - \rho_2(C)| \ :\ C \in \mathcal{F}_X \otimes \mathcal{F}_Z \ \} \leq 1 -c\ .
\end{equation}
\end{lm}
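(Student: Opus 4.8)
The plan is to reduce the statement to a pointwise-in-$z$ Doeblin inequality and then integrate in $z$. Fix a set $C\in\mathcal{F}_X\otimes\mathcal{F}_Z$ and, for each $z\in Z$, write $C_z:=\{x\in X:(x,z)\in C\}\in\mathcal{F}_X$ for the $z$-section of $C$. Since $C$ does not constrain the $Y$-coordinate, conditioning on a value $(y,z)$ reduces the event $C$ to $\{x\in C_z\}$, so that $\rho_j(C|y,z)=\rho_j(C_z|y,z)$ in the sense of (\ref{object3}). Using that the $(Y,Z)$-marginal of $\rho_j$ is $\mu_j\otimes\nu$ together with the assumed representation of $\rho_j$ through its conditional probabilities, one obtains $\rho_j(C)=\int_Z g_j(z)\,{\rm d}\nu(z)$ with $g_j(z):=\int_Y\rho_j(C_z|y,z)\,{\rm d}\mu_j(y)$. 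Hence it suffices to show $|g_1(z)-g_2(z)|\le 1-c$ for $\nu$-a.e.\ $z$.

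The crucial step is to convert the multiplicative comparison (\ref{object3}) into an additive, Doeblin-type one by applying it to the complementary set $C_z^c$ in place of $A$. This gives $c\,\rho_1(C_z^c|y,z)\le\rho_2(C_z^c|y',z)$, i.e.\ $c\bigl(1-\rho_1(C_z|y,z)\bigr)\le 1-\rho_2(C_z|y',z)$, which rearranges to
$$\rho_2(C_z|y',z)\ \le\ 1-c+c\,\rho_1(C_z|y,z)\qquad\text{for all }y,y'\in Y.$$
Integrating the left-hand side against ${\rm d}\mu_2(y')$ --- the right-hand side does not depend on $y'$ --- and then integrating in $y$ against ${\rm d}\mu_1(y)$ yields $g_2(z)\le 1-c+c\,g_1(z)$, and therefore $g_2(z)-g_1(z)\le(1-c)\bigl(1-g_1(z)\bigr)\le 1-c$. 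Because the hypothesis (\ref{object3}) is symmetric under exchanging $\rho_1$ and $\rho_2$ --- the chain $c\rho_1\le\rho_2\le c^{-1}\rho_1$ is the same as $c\rho_2\le\rho_1\le c^{-1}\rho_2$ --- the identical argument with the roles reversed gives $g_1(z)-g_2(z)\le 1-c$. Thus $|g_1(z)-g_2(z)|\le 1-c$ for $\nu$-a.e.\ $z$, and integrating in $z$,
$$|\rho_1(C)-\rho_2(C)|=\Bigl|\int_Z\bigl(g_1(z)-g_2(z)\bigr)\,{\rm d}\nu(z)\Bigr|\le 1-c.$$
Since $C\in\mathcal{F}_X\otimes\mathcal{F}_Z$ was arbitrary, taking the supremum gives (\ref{object2}).

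I do not expect a serious obstacle; the lemma is essentially bookkeeping around a coupling argument. The one genuinely non-routine move is feeding the complement $C_z^c$ into (\ref{object3}) to produce the bound $\rho_2\le(1-c)+c\,\rho_1$: the naive use of $\rho_2\le c^{-1}\rho_1$ alone only yields the weaker estimate $g_2(z)-g_1(z)\le c^{-1}-1$ in one direction, which is too large. The remaining points are purely measure-theoretic --- that $\rho_j(C|y,z)$ really depends only on the section $C_z$, so that (\ref{object3}) is applicable, and that the conditional probabilities are jointly measurable so that the successive integrations over $\mu_2$, $\mu_1$ and $\nu$ are legitimate --- and both are covered by the standing assumption that $\rho_1,\rho_2$ possess proper conditional probabilities with the stated integral representation. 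It is also worth noting in passing that (\ref{object3}), applied to a set and its complement, forces $c\le 1$, so that $1-c\ge 0$ and the conclusion is non-vacuous.
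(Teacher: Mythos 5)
Your proof is correct, and the overall strategy---exploit the shared $\nu$-marginal together with the conditional comparison to produce a Doeblin-type deficit---is the same as the paper's. The algebraic route differs in a way worth noting. The paper first establishes the global multiplicative inequality $\rho_1(C)\geq c\,\rho_2(C)$ by inserting the normalization $\int_Y{\rm d}\mu_2(y')=1$ as a dummy outer integral, invoking Fubini--Tonelli to swap it past the inner $\mu_1$-integral, and applying the right half of \eqref{object3} pointwise to the conditional kernels; the conclusion then follows from $\rho_2(C)-\rho_1(C)\leq(1-c)\rho_2(C)\leq 1-c$ and symmetry. You instead work pointwise in $z$: defining $g_j(z)=\int_Y\rho_j(C_z|y,z)\,{\rm d}\mu_j(y)$, you apply the \emph{left} half of \eqref{object3} to the complementary section $C_z^c$, rearrange to obtain the additive Doeblin inequality $\rho_2(C_z|y',z)\leq(1-c)+c\,\rho_1(C_z|y,z)$, and integrate. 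Both approaches genuinely need just one side of \eqref{object3} plus the symmetry observation (which you correctly make explicit, as the paper does not), and both need the trivial bound that a probability is at most $1$ to convert the multiplicative information into an additive deficit---the paper does this at the level of $\rho_2(C)\leq 1$ after obtaining $\rho_1\geq c\rho_2$, whereas you encode it via $1-g_1(z)\leq 1$ after passing to complements. Your version makes the Doeblin structure more transparent and avoids the dummy-integral insertion, at the mild cost of introducing the sections $C_z$ and the functions $g_j$; the paper's version is more compact but its final ``(\ref{object2}) follows directly'' leaves the $\rho_2(C)\leq 1$ step and the symmetry argument implicit. Your closing remark that \eqref{object3} applied to a set and its complement forces $c\leq 1$ is also a useful sanity check that the paper omits.
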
 

We shall apply this in (\ref{rep}) for any two different $(x_0,v_0)$ and $(x_1,v_1)$ in $\Lambda\times \R^d$ as follows:  We take
$Z = \R^d\times \R^d$, and let the $z$ variable be $(b,v)$. 

\begin{proof} Pick any $C \in \mathcal{F}_X \otimes  \mathcal{F}_Z$. 
Then
\begin{align*}
\rho_1(C) &= \int_{Y\times Z} \left[ \int_X 1_C(x,y,z)\rho_1({\rm d}x|y,z) \right]{\rm d}\mu_1 (y){\rm d}\nu (z) \\
&= \int_Y \left( \int_{Y\times Z} \left[ \int_X 1_C(x,y,z)\rho_1({\rm d}x|y,z) \right]{\rm d}\mu_1 (y){\rm d}\nu (z)\right) 
{\rm d}\mu_2 (y')\\
&= \int_Y \left( \int_{Y\times Z} \left[ \int_X 1_C(x,y,z)\rho_1({\rm d}x|y,z) \right]{\rm d}\mu_2 (y'){\rm d}\nu (z)\right) 
{\rm d}\mu_1 (y)\\
&\geq c\int_Y \left( \int_{Y\times Z} \left[ \int_X 1_C(x,y',z)\rho_1({\rm d}x|y',z) \right]{\rm d}\mu_2 (y'){\rm d}\nu (z)\right) 
{\rm d}\mu_1 (y)\\
&= c\rho_2(C)\ ,
\end{align*}
where the first equality is trivial, the second is valid by the Fubini-Tonelli Theorem, and the inequality is (\ref{object3}), together with the fact that
$1_C$ does not actually depend on $y$. Then (\ref{object2}) follows directly. 
\end{proof} 

In the context of (\ref{rep}), we apply Lemma~\ref{cond} for {\em fixed $s\in [t_0,t_0+1/\eta]$} as follows: We take $Z = \R^d\times \R^d$ with variables $b$ and $v$. We take $Y = \Lambda$ with variable $a$, and we take $X =\Lambda$ with variable $x$. We choose $C\in \mathcal{F}_X\otimes \mathcal{F}_Z$, and in fact, independent of the $b$ component of $Z$. In the next subsection we shall derive an explicit formula for the conditional probability $p_{a,b,s,s+u}(x| v)$, and along the way,  
$p_{b,u}(v)$. Using this formula, we shall show that (\ref{object3}) is satisfied.  The constant $c$ in 
(\ref{object3}) will be shown to depend only on $T_\infty$ and $\eta$.  Then, integrating in $s$, we shall have proved:

\begin{lm}\label{uniform} For fixed $a$ and $b$, and for fixed $u> 0$, let $p_{a,b,s,s+u}(x,v)$  be the probability 
density of $(x_{s+u},v_{s+u})$ for the solution of (\ref{SDE1})
started at $(x_s,v_s) = (a,b)$ with probability $1$. Let $p_{a,b,s,s+u}(x|v)$ be the conditional density of $x_{s+u}$
given $v_{s+u} = v$.  Then there is an explicitly computable constant 
$C_{\eta,T_\infty,|\Lambda|}>0$ depending only on $\eta$, $T_\infty$ and $|\Lambda|$
and an explicitly computable $s_0 < \infty$ such that for all $s\geq s_0$ and all $u\in [1/\eta,2/\eta]$,
the following  bound holds uniformly in $x\in \Lambda$:
\begin{equation}\label{joint26}
C_{\eta,T_\infty,|\Lambda|}  \leq p_{a,b,s,s+u}(x| v) \leq \frac{1}{C_{\eta,T_\infty,|\Lambda|}}\ .
\end{equation}
\end{lm}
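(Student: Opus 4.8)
The plan is to obtain an explicit Gaussian formula for the joint law of $(x_{s+u}, v_{s+u})$ for the linear SDE~(\ref{SDE1}) started at $(a,b)$, and then read off the conditional density $p_{a,b,s,s+u}(x\mid v)$ as a nondegenerate Gaussian in $x\in\R^d$, restricted to the torus $\Lambda$ by periodization. First I would solve (\ref{SDE1}) explicitly: writing $T(r) = T_\infty + e^{-r\eta}(T_f(0)-T_\infty)$, the solution is
\begin{equation}\label{solform}
v_{s+u} = e^{-u} b + \sqrt{2}\int_s^{s+u} e^{-(s+u-r)}\sqrt{T(r)}\,{\rm d}w_r\ ,\qquad
x_{s+u} = a + \int_s^{s+u} v_r\,{\rm d}r\ ,
\end{equation}
so that $(x_{s+u}-a, v_{s+u})$ is jointly Gaussian with mean $((1-e^{-u})b,\ e^{-u}b)$ and a covariance matrix $\Sigma(s,u)$ that is a scalar ($d=1$: literally scalar; general $d$: a scalar times $I_d$ in each $d$-block) depending only on $u$ and the values $\{T(r): r\in[s,s+u]\}$. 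Since $T_1 \le T(r) \le T_2$ for all $r\ge 0$ (indeed $T(r)\to T_\infty$), the relevant variances are pinched between two strictly positive constants depending only on $u\in[1/\eta,2/\eta]$ and $T_1,T_2$; in particular the $2\times 2$ (per block) covariance is uniformly nondegenerate.

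Next I would extract the conditional law. For a nondegenerate bivariate Gaussian, the conditional distribution of $x_{s+u}$ given $v_{s+u}=v$ is Gaussian with a variance equal to the Schur complement $\sigma_{xx} - \sigma_{xv}^2/\sigma_{vv}$, which by the uniform nondegeneracy above lies in a fixed compact subinterval of $(0,\infty)$ determined by $\eta, T_1, T_2$ (hence by $\eta$ and $T_\infty$, absorbing the fixed $T_1,T_2$); and with a conditional mean $m(a,b,v,s,u)$ that is an affine function of $a, b, v$. On $\R^d$ this Gaussian density is bounded above by $(2\pi\sigma_{\min}^2)^{-d/2}$ and, on any fixed bounded set, bounded below; the point of passing to the torus is that periodizing a Gaussian of bounded-below-and-above variance over $\Lambda$ gives, for every $x\in\Lambda$, a value in $[C_{\eta,T_\infty,|\Lambda|},\ 1/C_{\eta,T_\infty,|\Lambda|}]$ — the lower bound because at least one lattice translate of the mean lands within $\mathrm{diam}(\Lambda)\sqrt d$ of $x$ and the Gaussian there is bounded below uniformly (using $\sigma^2 \le \sigma_{\max}^2$ for the lower tail control and $\sigma^2\ge\sigma_{\min}^2$ for the upper), the upper bound by comparing the periodized sum to a convergent series of Gaussians with variance $\ge\sigma_{\min}^2$. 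This is exactly the bound (\ref{joint26}), uniform in $x\in\Lambda$ and in $a,b$ (the $a,b$-dependence sits only in the conditional mean, which periodization washes out), for all $u\in[1/\eta,2/\eta]$ and all $s \ge s_0$, where $s_0$ is chosen so that $T(r)$ stays within a fixed neighborhood of $T_\infty$ for $r\ge s_0$ (only needed to make the constants clean; the pinching $T_1\le T(r)\le T_2$ already holds for all $s\ge 0$).

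Finally, integrating this pointwise two-sided bound against the mixing measure in $s$ and $b$ as in (\ref{rep}) preserves it (a convex average of quantities in $[C, 1/C]$ stays in $[C,1/C]$), which is the form in which Lemma~\ref{cond} will be applied, with $Z = \R^d\times\R^d$ carrying $(b,v)$ and $Y=\Lambda$ carrying $a$. The main obstacle is bookkeeping rather than conceptual: one must verify that the Schur-complement variance is genuinely bounded away from $0$ uniformly over $u\in[1/\eta,2/\eta]$ — i.e. that $x_{s+u}$ is not nearly deterministic given $v_{s+u}$ — which is where the degeneracy of the diffusion (noise enters only through $v$) could in principle bite; but since $x_{s+u}-a = \int_s^{s+u} v_r\,{\rm d}r$ accumulates the noise over a time interval of length $u \ge 1/\eta > 0$ with a kernel not proportional to the one producing $v_{s+u}$, the conditional variance is strictly positive, and tracking it gives an explicit lower bound in terms of $\eta$ and $T_\infty$. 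The remaining work — periodization estimates and the elementary Gaussian conditioning identities — is routine.
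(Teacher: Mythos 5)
Your proposal follows essentially the same route as the paper: solve the linear SDE explicitly, identify the joint law of $(x_{s+u}-a,v_{s+u})$ as a nondegenerate Gaussian, read off the conditional density of $x$ given $v$ as a one-dimensional Gaussian with variance $\sigma_x^2(1-\rho^2)$ (your Schur complement), then periodize onto the torus and use that some lattice translate of the conditional mean lands within $L/2$ of the given $x$. The paper's proof is exactly this, with explicit formulas (\ref{joint41})--(\ref{joint43}) for the variances and covariance and the two-sided bounds (\ref{joint24}), (\ref{joint25}) on $\rho$ and $\sigma_x$ quantifying your observation that the kernels $e^{r-(s+u)}$ and $1-e^{r-(s+u)}$ in the stochastic integrals are not proportional.

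One parenthetical claim is wrong and should be deleted: you assert $T_1\le T(r)\le T_2$ for all $r\ge 0$, so that the cutoff $s_0$ is ``only needed to make the constants clean.'' In fact $T(r)=T_\infty+e^{-r\eta}(T_f(0)-T_\infty)$ interpolates between the arbitrary finite initial temperature $T_f(0)$ and $T_\infty$, and nothing places $T_f(0)$ in $[T_1,T_2]$. For small $r$ the variances therefore carry a dependence on $T_f(0)$ that the lemma's constant $C_{\eta,T_\infty,|\Lambda|}$ is not permitted to have. The $s_0$ cutoff is essential, not cosmetic: it is precisely what removes the $T_f(0)$ dependence by forcing $T(r)$ close to $T_\infty$ on $[s,s+u]$. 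You do in fact invoke $s_0$ in the correct way, so the argument survives once this parenthetical is dropped.
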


The proof is given at the end of the next subsection. 

\subsection{Estimates for the degenerate diffusion}

We now derive an explicit formula for this probability, which 
 will be an analog of  a well-known formula obtained by Kolmogorov \cite{K34}  in 1934. The variant we need, taking into account our time-varying temperature, is easily derived using stochastic calculus as done by McKean in 
\cite{McK}. 
 We will write down the formula in the case $d=1$ in order to keep the notation simple. It will be clear from these formulas  and their derivation that the conclusions we draw from them are valid in all dimensions.

If $x(s) = a$ and $v(s) = b$, we have that for $t>s$
\begin{equation}\label{SDE1a}
v_t = e^{s-t}b + \int_s^t e^{r-t}\sqrt{2T(r)}{\rm d}w_r\ .
\end{equation}  

In our application,  $\Lambda$ is a circle of circumference $L$. However, we may solve (\ref{SDE1}) for $x_t\in \R$, and then later ``wrap'' this into the circle. So for the moment, let us take $x_t\in \R$. 
We  may then integrate to find $x_t$. To do so, note that 
$$ \int_s^t\int_s^u e^{r-u}\sqrt{2T(r)}{\rm d}w_r{\rm d}u = \int_s^t (1- e^{r-t})\sqrt{2T(r)}{\rm d}w_r \ .$$
Therefore, 
\begin{equation}\label{SDE1b}
x_t =  a + (1-e^{s-t})b + \int_s^t (1- e^{r-t})\sqrt{2T(r)}{\rm d}w_r \ .
\end{equation} 

The random variables $x_t$ and $v_t$ are evidently Gaussian, and their joint distribution is determined by their means, variances, and correlation. 
Let $\mu_x$ and $\mu_v$ denote the means of $x_t$ and $v_t$ respectively. Let $\sigma_x$ and $\sigma_v$  denote the standard deviations of $x_t$ and $v_t$ respectively. Finally, let $\rho$ denote their correlation, which means that
$$\rho \sigma_x\sigma_v = {\mathbb{E}}(x_t - \mu_x)(v_t - \mu_v)\ .$$
The joint density of $(x_t,v_y)$, again taking $x_t$ to be $\R$ valued, is the probability density given by 
\begin{multline}\label{joint}
f(x,v) = \frac{1}{2\pi \sigma_x\sigma_v \sqrt{1-\rho^2}}\times 
\\ \exp\left(\frac{-1}{2(1-\rho^2)}\left[ \frac{(x-\mu_x)^2}{\sigma_x^2} +
 \frac{(v-\mu_v)^2}{\sigma_v^2}  - \frac{2\rho (x-\mu_x)(v - \mu_v)}{\sigma_x\sigma_v}\right]\right)\ .
 \end{multline}
 Completing the square, we obtain the alternate form
 \begin{multline}\label{joint2}
f(x,v) = \frac{1}{2\pi \sigma_x\sigma_v \sqrt{1-\rho^2}}\times \\
 \exp\left(\frac{-1}{2(1-\rho^2)}\left[ \left( \frac{(x-\mu_x)}{\sigma_x} - \rho \frac{(v-\mu_v)}{\sigma_v}\right)^2 +
 (1-\rho^2)\frac{(v-\mu_v)^2}{\sigma_v^2}  \right]\right)\ .
 \end{multline}
Evidently, the conditional density of $x_t$ given $v_t = v$ is
\begin{equation}\label{joint3}
f(x|v) := \frac{1}{ \sigma_x \sqrt{2\pi(1-\rho^2)}}
 \exp\left(\frac{-1}{2(1-\rho^2)}\left( \frac{(x-\mu_x)}{\sigma_x} - \rho \frac{(v-\mu_v)}{\sigma_v}\right)^2 
  \right)\ ,
 \end{equation}
 and the density of $v_t$ is
 \begin{equation}\label{joint4}
 f(v) = \frac{1}{\sigma_v \sqrt{2\pi}} \exp\left(-\frac{(v-\mu_v)^2}{2\sigma_v^2} \right)\ .
  \end{equation}
 Note that $f(x,v)= f(x|v)f(v)$.   Now we wrap this density onto $\Lambda$ which we  identify with the interval $(-L/2,L/2] \subset \R$.
 The wrapped density is, for $x\in (-L/2,L/2]$,
  \begin{equation}\label{joint5} 
  \widetilde f(x,v) = \sum_{k\in \Z} f(x+kL,v)f(v)\ .
    \end{equation}
(It is evident that for fixed $v$, the lower bound on $\widetilde f(x,v)$ as a function of $x$ will go to zero as $L$ goes to infinity.)
Recall that for fixed $a$ and $b$, and for fixed $u> 0$, we have defined $p_{a,b,s,s+u}(x,v)$ to be the probability 
density of $(x_{s+u},v_{s+u})$ for the solution of (\ref{SDE1})
started at $(x_s,v_s) = (a,b)$ with probability $1$.  By the computations above, $p_{a,b,s,s+u}(x,v)$ is obtained
by substituting $\sigma_x(s+u)$, $\sigma_v(s+u)$ and $\rho(s+u)$ into (\ref{joint5}).   We now compute these quantities using the Ito isometry for the variance and covariance.

 From (\ref{SDE1a}) and (\ref{SDE1b}) we readily compute
 \begin{equation}\label{joint40}
 \mu_x = a + (1- e^{s-t}b) \quad{\rm and}\quad  \mu_v = e^{s-t}b\ . 
 \end{equation}
 Next, we compute:
 \begin{align}\label{joint41}
 \sigma_v^2(s+u) &= \int_s^{s+u} e^{2(r-(s+u))}2T(r){\rm d}r\nonumber\\
 &= (1- e^{-2u})T_\infty + \frac{2(T_{f(0)} - T_\infty)}{2-\eta}e^{-\eta s} (e^{-\eta u} - e^{-2u})\ .
 \end{align}
 (Note that a limit must be taken at $\eta =2$.)
 
 \begin{align}\label{joint42}
 \sigma_x^2(s+u) &= \int_s^{s+u} (1- e^{r-(s+u)})^2 2T(r){\rm d}r\\
 &= 4(e^{-u} -1 +u)T_\infty - (e^{-2u} -1 +2u)T_\infty\nonumber\\
 &+  \frac{2(T_{f(0)} - T_\infty)}{\eta(\eta-1)(\eta-2)}\times \nonumber\\
 &\phantom{+}e^{-\eta s} ( \eta^2 (1 - e^{-u})^2  -\eta  
 (1-e^{-u})(3-e^{-u}) + 2(1- e^{-\eta u}))\ .
 \end{align}
(Note that a limit must be taken at $\eta =1$ or $\eta =2$. Note also that $\sigma_x^2(s+u)$ is of order $u^3$ for small $u$, 
as one would expect from Kolmogorov's formula, and of order $u$ for large $u$.)  Next, we compute the covariance.
For $t = s+u$, 
 
\begin{align} \label{joint43} {\mathbb{E}}(x_t - \mu_x)(v_t - \mu_v) &= 
\int_s^t e^{r-t}(1- e^{r-t}) 2T(r){\rm d}r\nonumber\\
&= T_\infty (1-e^{-u} )^2 \nonumber\\
&+  \frac{2(T_{f(0)} - T_\infty)}{(\eta-1)(\eta-2)}e^{-\eta s}( (\eta-2)  (1 - e^{-u})  -(\eta -1) e^{-2u} + e^{-\eta u})\ .
\end{align}

In our application, we are concerned with $s\in (t_0 + 1/\eta]$ and $t = t_0 + 2/\eta$, and hence with $u\in [1/\eta,2/\eta]$.
For such $u$, and large $t_0$ (and hence large $s$, the following approximations are accurate up to exponentially small
(in $t_0$) percentage-wise corrections:

\begin{align}
 \sigma_v^2(s+u) 
 &\approx (1- e^{-2u})T_\infty \label{joint23a} \\
 \sigma_x^2(s+u) &\approx  [4(e^{-u} -1 +u) - (e^{-2u} -1 +2u)]T_\infty\label{joint23b}\\
 \rho(s+u) &\approx  \frac{ (1-e^{-u} )^2 }
 {\sqrt{ (1- e^{-2u}) (4(e^{-u} -1 +u)- (e^{-2u} -1 +2u))} }\label{joint23c}\ .
\end{align}
Define $\hat  \rho(s+u)$ to be the quantity on the right side of (\ref{joint23c}). One readily checks that 
$$\hat  \rho(s+u) = \frac{\sqrt{3}}{2} + {\cal{O}}(u)$$
for small $u$, that $ \hat  \rho(s+u) \sqrt{1+ u/5}$ is monotone decreasing, and that 
$\lim_{u\to\infty} \hat \rho(s+u) \sqrt{1+ u/5} =1/ \sqrt{10}$.  Altogether,
\begin{equation}\label{joint240}
\frac{\sqrt{3}}{2}\frac{1}{\sqrt{5+u}} \geq \hat \rho(s+u) \geq  \frac{1}{\sqrt{2}}\frac{1}{\sqrt{5+u}}\ .
\end{equation}
It now follows that for all $s$ sufficiently large, and all $u\in [1/\eta,2/\eta]$,
\begin{equation}\label{joint24}
\frac{\sqrt{3}}{2}\frac{1}{\sqrt{4+1/\eta}} \geq \rho(s+u) \geq  \frac{1}{\sqrt{2}}\frac{1}{\sqrt{6+2\eta}}\ .
\end{equation}

Likewise, define $\hat \sigma_x^2(s+u)$ to be the right side of (\ref{joint23b}). Simple calculation show that 
$\hat \sigma_x^2(s+u)(1+2u)^2/u^3$ is monotone for $u>0$ increasing with
$$ 8T_\infty  \geq   \hat \sigma_x^2(s+u)(1+2u)^2/u^3  \geq  \frac{2}{3}T_\infty \ .$$
It now follows that for all $s$ sufficiently large, and all $u\in [1/\eta,2/\eta]$,
\begin{equation}\label{joint25}
64\frac{\eta^{-3}}{(1+2/\eta)^2}T_\infty  \geq \sigma^2_x(s+u) \geq  \frac{2}{\sqrt{3}}\frac{\eta^{-3}}{(1+ 5/\eta)^2}T_\infty\ .
\end{equation}

\begin{proof}[Proof of Lemma~\ref{uniform}] By the computations above, for any $x\in [-L/2,L/2]$ by substituting into 
$$\sum_{k\in \Z} \frac{1}{ \sigma_x \sqrt{2\pi(1-\rho^2)}}
 \exp\left(\frac{-1}{2(1-\rho^2)}\left( \frac{(x+ kL -\mu_x)}{\sigma_x} - \rho \frac{(v-\mu_v)}{\sigma_v}\right)^2
  \right)
  $$ the appropriate values of $\mu_x$, $\mu_v$, $\sigma_x$, $\sigma_v$ and $\rho$, given by (\ref{joint40}), 
  (\ref{joint41}), (\ref{joint42}) and (\ref{joint43}).
  
  Now, whatever, the value of $v$, there is some $k\in Z$ so that 
  $$|x+ kL -\mu_x - (\sigma_x/\sigma_v)\rho (v-\mu_v)| \leq L/2\ .$$
  Retaining only this term in the sum, 
  $$p_{a,b,s,s+u}(x|v) \geq  \frac{1}{ \sigma_x \sqrt{2\pi(1-\rho^2)}}
  \exp\left(\frac{-L^2}{8(1-\rho^2)\sigma_x^2} \right)\ .$$ 
Now using the upper and lower bounds for $\rho$ and $\sigma_x$ that are given in 
(\ref{joint24}) and (\ref{joint25}), which are valid for all $s\geq s_0$ that may be explicitly computed keeping track of
constants leading up to (\ref{joint24}) and (\ref{joint25}).  The corresponding uniform upper bound is readily derived by estimating the sum in $k$, which converges extremely rapidly. 
\end{proof}

It is evident from the proof that the analogous lemma for the $d$ dimensional version of our process is also valid.

\subsection{Bounds on transition functions}

Recall that for any $t> s \geq t_0>0$, and any measurable $A \subset \Lambda \times \R^d$, and any $(x_0,v_0)\in \Lambda \times \R^d$,
$P_{t_0,t}((x_0,v_0),A)$ is the probability that our original stochastic process (with interactions with the reservoirs), started at $(x_0,v_0)$ at time $s$ satisfies
$(x_t,v_t)\in A$. Let 
$\mathbb{P}_{s, (x_0,v_0)}$ be the path-space measure of the stochastic process. Then, with $E$ being the event considered
in Lemma~\ref{event}, 
\begin{align}\label{cond1}
P_{t_0,t}((x_0,v_0),A)  &= \mathbb{P}_{t_0, (x_0,v_0)}(\{(x_t,v_t)\in A\})\\
&= \mathbb{P}_{t_0, (x_0,v_0)}(\{(x_t,v_t)\in A\}\cap E ) + \mathbb{P}_{t_0, (x_0,v_0)}(\{(x_t,v_t)\in A\}\cap E^c)\nonumber
\end{align}
By Lemma~\ref{event}, we can express $\mathbb{P}_{t_0, (x_0,v_0)}(\{(x_t,v_t)\in A\}\cap E ) $ in terms of the conditional probabilities we have estimated in the previous subsections:
$$ \mathbb{P}_{t_0, (x_0,v_0)}(\{(x_t,v_t)\in A\}\cap E )  = e^{-2} \mathbb{P}_{t_0, (x_0,v_0)}(\{(x_t,v_t)\in A \ |\  E ) \ .$$

\begin{thm}\label{deficit}  There is an explicitly computable constant 
$C_{\eta,T_\infty,|\Lambda|}>0$ depending only on $\eta$ and $T_\infty$
and an explicitly computable $t_0 < \infty$ such that for $t =t_0+2/\eta$,  and all
$(x_0,v_0)$  and $(x_1,v_1)$ in $\Lambda\times \R^d$
$$
\sup\{ | \mathbb{P}_{t_0, (x_0,v_0)}(\{(x_t,v_t)\in A\}  )  - \mathbb{P}_{t_0, (x_1,v_1)}(\{(x_t,v_t)\in A\}  )| \} \leq
1 - e^{-2}
C_{\eta,T_\infty,|\Lambda|}\ ,$$
where the supremum is taken over all measurable subset of $\Lambda\times \R^d$, 
\end{thm}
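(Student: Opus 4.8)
The plan is to derive Theorem~\ref{deficit} as a direct application of the coupling Lemma~\ref{cond} to the integral representation (\ref{rep}) of the conditional transition probability, using the degenerate-diffusion bound of Lemma~\ref{uniform} to verify the hypothesis (\ref{object3}). First I would split the transition probability as in (\ref{cond1}) into the contribution on the event $E$ and on $E^c$, and write
$$ \mathbb{P}_{t_0,(x_0,v_0)}(\{(x_t,v_t)\in A\}) = e^{-2}\,\mathbb{P}_{t_0,(x_0,v_0)}(\{(x_t,v_t)\in A\}\mid E) + \mathbb{P}_{t_0,(x_0,v_0)}(\{(x_t,v_t)\in A\}\cap E^c)\ . $$
Since $\mathbb{P}_{t_0,(x_0,v_0)}(E)=e^{-2}$ is independent of the starting point by Lemma~\ref{event}, when we subtract the expression for $(x_0,v_0)$ from that for $(x_1,v_1)$ the $E^c$-parts, each of total mass $1-e^{-2}$, can be bounded crudely in absolute value by $1-e^{-2}$, while the $E$-parts carry the prefactor $e^{-2}$ times a conditional probability to which we will apply the coupling lemma. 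Thus it suffices to show that the conditional laws $h_{(x_0,v_0)}$ and $h_{(x_1,v_1)}$ of (\ref{rep}) are within total variation distance $1-C_{\eta,T_\infty,|\Lambda|}$ of each other.

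For that, I would apply Lemma~\ref{cond} exactly as indicated after its statement: take $X=\Lambda$ with variable $x$ (the final position), $Y=\Lambda$ with variable $a$ (the position at the collision time), and $Z=\R^d\times\R^d$ with variable $(b,v)$ (the post-collision velocity and the final velocity). The marginal of $h_{(x_0,v_0)}$ on $Y\times Z$ is $\mu_{(x_0,v_0),t_0,s}\otimes \nu$ after also integrating out $s$ — strictly speaking one should carry out the argument for fixed $s\in[t_0,t_0+1/\eta]$, apply Lemma~\ref{cond} to the $s$-sections, and then integrate the resulting uniform bound $1-c$ over the normalized distribution of $s$ (which is the same for both starting points, being the law of the Poisson arrival time conditioned on $E$); the bound $1-c$ is unchanged by this convex combination. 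For fixed $s$, the conditional law of $x$ given $(a,b,v)$ is precisely $p_{a,b,s,s+u}(x\mid v)$ with $u=t_0+2/\eta-s$, and Lemma~\ref{uniform} gives
$$ C_{\eta,T_\infty,|\Lambda|}\le p_{a,b,s,s+u}(x\mid v)\le \frac{1}{C_{\eta,T_\infty,|\Lambda|}} $$
uniformly in $x\in\Lambda$ for all $s\ge s_0$ and all $u\in[1/\eta,2/\eta]$; since these are densities on the compact $\Lambda$, dividing the lower bound for one pair $(a,b,v)$ by the upper bound for another yields hypothesis (\ref{object3}) with constant $c=|\Lambda|^2 C_{\eta,T_\infty,|\Lambda|}^2$ (absorbing $|\Lambda|^{-2}$ factors into the renamed constant). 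Lemma~\ref{cond} then gives total variation deficit at most $1-c$ for the $E$-part, and combining with the $E^c$ estimate gives the stated bound $1-e^{-2}C_{\eta,T_\infty,|\Lambda|}$ after renaming the constant once more.

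The main obstacle — already largely handled by the preceding subsection — is verifying that the lower bound in Lemma~\ref{uniform} is genuinely uniform in $x\in\Lambda$ and independent of $a$, $b$, $v$ and of the starting point $(x_0,v_0)$. The point is that $p_{a,b,s,s+u}(x\mid v)$, being a Gaussian on $\R$ (times $\sigma_v$-independent factors) wrapped onto the circle, has the form of a theta-like sum whose single nearest term is controlled from below using the explicit two-sided bounds (\ref{joint24}) on the correlation $\rho(s+u)$ and (\ref{joint25}) on the variance $\sigma_x^2(s+u)$; the dependence on $\mu_x$, $\mu_v$, $b$ and $v$ is killed by the translation-invariance of the circle (one simply chooses the integer shift $k$ that brings the argument into $[-L/2,L/2]$). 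The remaining care is bookkeeping: one must choose $t_0$ (hence $s_0$) large enough that the exponentially small corrections separating the exact expressions (\ref{joint41})--(\ref{joint43}) from the approximations (\ref{joint23a})--(\ref{joint23c}) do not spoil the two-sided bounds, and verify that the constant $c$ produced this way depends only on $\eta$, $T_\infty$ and $|\Lambda|$, which it visibly does.
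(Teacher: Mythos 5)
Your proof is essentially identical to the paper's: both decompose the transition probability via the event $E$ of Lemma~\ref{event}, bound the $E^c$-contribution crudely by $1-e^{-2}$, and then apply the coupling Lemma~\ref{cond} to the conditional-on-$E$ part, using the two-sided density bound of Lemma~\ref{uniform} to verify the domination hypothesis (\ref{object3}). If anything, your accounting is slightly more explicit than the paper's on the integration over the collision time $s$ and on converting the pointwise density bound into the ratio form required by (\ref{object3}) (which produces $C^2$ rather than $C$, consistent with the last line of the paper's own proof though not with the constant named in the theorem statement).
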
 

\begin{proof}  By Lemma~\ref{event},
\begin{multline}
\mathbb{P}_{t_0, (x_0,v_0)}(\{(x_t,v_t)\in A\}  =\\ e^{-2} \mathbb{P}_{t_0, (x_0,v_0)}(\{(x_t,v_t)\in A\} | E )
+ (1-e^{-2}) \mathbb{P}_{t_0, (x_0,v_0)}(\{(x_t,v_t)\in A\} | E^c )\ .
\end{multline}
By Lemma~\ref{uniform}, (\ref{object3}) is satisfied with $c = C_{\eta,T_\infty,|\Lambda|}$ when we apply Lemma~\ref{cond} to the probabilities
$\mathbb{P}_{t_0, (x_0,v_0)}(\{(x_t,v_t)\in A\}) $ given by (\ref{object}) and (\ref{rep}).
Therefore,
\begin{align*}
&|\mathbb{P}_{t_0, (x_0,v_0)}(\{(x_t,v_t)\in A\}) - \mathbb{P}_{t_0, (x_1,v_1)}(\{(x_t,v_t)\in A\})|
\\ 
& \leq e^{-2}|\mathbb{P}_{t_0, (x_0,v_0)}(\{(x_t,v_t)\in A\}|E) - 
\mathbb{P}_{t_0, (x_1,v_1)}(\{(x_t,v_t)\in A\}|E)|  \\
&  + (1-e^{-2}) |\mathbb{P}_{t_0, (x_0,v_0)}(\{(x_t,v_t)\in A\}|E^c) - 
\mathbb{P}_{t_0, (x_1,v_1)}(\{(x_t,v_t)\in A\}|E^c)|\\
& \leq e^{-2} \Big|\mathbb{P}_{t_0, (x_0,v_0)}(\{(x_t,v_t)\in A\}|E) - 
\mathbb{P}_{t_0, (x_1,v_1)}(\{(x_t,v_t)\in A\}|E)\Big|  + (1-e^{-2})\\
&\leq e^{-2}(1 - C_{\eta,T_\infty,|\Lambda|}^2) + (1- e^2) = 1 - e^{-2}C_{\eta,T_\infty,|\Lambda|}^2)\ .
\end{align*}
Theorem~\ref{deficit} now follows from what has been said in the paragraph preceding it. 
\end{proof}

We next  recall a form of Doeblin's Theorem
 in the context of temporally non-homogeneous processes.

Let $P_{s,t}(z,A)$, $t> s$, be a family of Markov kernels on a measure space $(Z,\mathcal{F})$
such that  for $r < s < t$
\begin{equation}\label{mar}
P_{r,t}(z,A) = \int_Z P_{r,s}(z,{\rm d}y) P_{s,t}(y,A)\ ,
\end{equation}
and we suppose also that for fixed $A$, $s$ and $t$, $P_{s,t}(z,A)$ is a continuous function of $z$.
Define the quantity
$$\rho_{s,t} =   \sup_{z,y\in Z}\sup_{A\in \mathcal{F}} \left\{|P_{s,t}(z,A) - P_{s,t}(y,A)|\right\}\ .$$
The following lemma is an adaptation of a proof in Varadhan's text \cite{Var}:

\begin{lm}\label{doeb} For all $r < s < t$, 
$\rho_{r,t} \leq \rho_{r,s}\rho_{s,t}$.
\end{lm}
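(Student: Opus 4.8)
The plan is to prove the sub-multiplicativity of the "Dobrushin-type" coefficients $\rho_{s,t}$ by a direct computation using the Chapman--Kolmogorov identity \eqref{mar}, together with the elementary observation that the total-variation-type quantity $\rho_{s,t}$ can be rewritten using signed-measure decompositions. First I would fix $r<s<t$ and two starting points $z,y\in Z$, and write, for any $A\in\mathcal F$,
\begin{equation*}
P_{r,t}(z,A) - P_{r,t}(y,A) = \int_Z \big(P_{r,s}(z,{\rm d}w) - P_{r,s}(y,{\rm d}w)\big)\,P_{s,t}(w,A)\ .
\end{equation*}
The signed measure $\mu := P_{r,s}(z,\cdot) - P_{r,s}(y,\cdot)$ has total mass zero, so writing its Jordan decomposition $\mu = \mu_+ - \mu_-$ we have $\mu_+(Z) = \mu_-(Z) =: m$, and moreover $m = \tfrac12\|\mu\|_{\rm TV} \le \rho_{r,s}$ (the last inequality because $|\mu(B)| \le \rho_{r,s}$ for every $B$, with equality approached on $B = \{{\rm d}\mu_+/{\rm d}|\mu| = 1\}$).

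Next I would estimate $\int_Z P_{s,t}(w,A)\,{\rm d}\mu(w) = \int_Z P_{s,t}(w,A)\,{\rm d}\mu_+(w) - \int_Z P_{s,t}(w,A)\,{\rm d}\mu_-(w)$. Since $\mu_+$ and $\mu_-$ are (after normalizing by $m$) probability measures, the difference of the two integrals equals $m$ times the difference of two averages of the function $w\mapsto P_{s,t}(w,A)$; each such average lies between $\inf_w P_{s,t}(w,A)$ and $\sup_w P_{s,t}(w,A)$, so their difference is bounded in absolute value by $\sup_{w,w'}|P_{s,t}(w,A) - P_{s,t}(w',A)| \le \rho_{s,t}$. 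Hence
\begin{equation*}
|P_{r,t}(z,A) - P_{r,t}(y,A)| \le m\,\rho_{s,t} \le \rho_{r,s}\,\rho_{s,t}\ .
\end{equation*}
Taking the supremum over $A\in\mathcal F$ and over $z,y\in Z$ gives $\rho_{r,t}\le\rho_{r,s}\rho_{s,t}$, which is the claim. The continuity hypothesis on $z\mapsto P_{s,t}(z,A)$ and the measure-space setup are only needed to ensure all the integrals and suprema are well defined (e.g. measurability of $w\mapsto P_{s,t}(w,A)$), so I would remark on that but not dwell on it.

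I expect the only genuinely delicate point to be the passage from "$|\mu(B)|\le\rho_{r,s}$ for all $B$" to "$\tfrac12\|\mu\|_{\rm TV}\le\rho_{r,s}$", i.e. identifying the total mass of $\mu_\pm$ with the optimal-set discrepancy; this is the standard fact that for a finite signed measure of total mass zero, $\sup_B\mu(B) = \tfrac12\|\mu\|_{\rm TV} = \mu_+(Z)$, achieved on the Hahn-decomposition set. Everything else is bookkeeping with Fubini and the Chapman--Kolmogorov relation. An alternative, coupling-flavored route would avoid the Jordan decomposition entirely: interpret $\rho_{r,s}$ as (one minus) a Dobrushin coupling coefficient and compose couplings across the time intervals $[r,s]$ and $[s,t]$; but since the excerpt presents this as an adaptation of Varadhan's analytic argument, I would keep the signed-measure computation above.
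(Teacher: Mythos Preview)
Your proposal is correct and follows essentially the same approach as the paper: both start from Chapman--Kolmogorov, introduce the signed measure $P_{r,s}(z,\cdot)-P_{r,s}(y,\cdot)$ with its Hahn/Jordan decomposition, identify its positive mass with at most $\rho_{r,s}$, and bound the integral of $w\mapsto P_{s,t}(w,A)$ against this signed measure by the oscillation of that function, which is at most $\rho_{s,t}$. The only cosmetic difference is that the paper packages the oscillation bound as $\inf_{c}\|f-c\|_\infty\le\tfrac12\rho_{s,t}$ paired with $|\nu|(Z)\le 2\rho_{r,s}$, whereas you phrase it as a difference of two probability-averages bounded by $\rho_{s,t}$ times $m\le\rho_{r,s}$; the arithmetic is identical.
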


\begin{proof} By (\ref{mar}), 
\begin{equation}\label{mar2}
|P_{r,t}(z,A) - P_{r,t}(y,A)|  =   
\left|\int_Z  P_{r,s}(z,{\rm d}w) P_{s,t}(w,A)  -   \int_Z  P_{r,s}(y,{\rm d}w) P_{s,t}(w,A)\right| 
\end{equation}
To write this more compactly, introduce the continuous function $f(w) = P_{s,t}(w,A)$, and 
let $\nu$ denote the signed measure
$${\rm d}\nu(w)  = P_{r,s}(z,{\rm d}w) - P_{r,s}(y,{\rm d}w)\ .$$   
Then (\ref{mar2}) becomes
\begin{equation}\label{mar3}
|P_{r,t}(z,A) - P_{r,t}(y,A)|  = 
\left|\int_Z  f(w) {\rm d}\nu(w) \right| \ .
\end{equation}
Define $\|\nu\| = \sup_{A\in \mathcal{F}}|\nu(A)|$. Then if $\nu = \nu_+ - \nu_-$ is the Hahn decomposition of $\nu$ into its positive and negative parts,  $\nu_+(Z) = \nu_-(Z)$ (since $\nu(Z) = 0$), and
$$\|\nu\| = \nu_+(Z) \ .$$
Let ${\rm d}|\nu|$ be the measure defined by  ${\rm d}|\nu| = {\rm d}\nu_+ + {\rm d}\nu_-$. Then
\begin{equation}\label{mar10}
|\nu|(X) = 2\|\nu\| \leq 2 \rho_{t,s}\ .
\end{equation}
If $\varphi$ is any continuous function on $Z$, we have
\begin{equation}\label{mar6}
\left|\int_Z \varphi {\rm d}\nu\right| \leq   \int_Z |\varphi|  {\rm d}|\nu| \leq \|\varphi\|_\infty |\nu|(Z) = 2
\|\varphi\|_\infty\|\nu\|\ .
\end{equation}
By hypothesis
$$|f(z_1) - f(z_2)| \leq \rho_{s,r} \quad{\rm for\ all}\quad z_1,z_2\in Z\ . $$
It follows that the range of $f$ is contained in an interval of width at most $\rho_{s,r}$, and hence
\begin{equation}\label{mar11}
\inf_{c\in \R}\|f - c\|_\infty \leq \frac12 \rho_{s,r}\ .
\end{equation}
Since $P_{s,t}(z,{\rm d}w) $ and $P_{s,t}(y,{\rm d}w) $ are both probability measures, for any constant $c\in \R$,
\begin{equation}\label{mar7}
\int_Z f(w){\rm d}\nu(w) =  \int_Z (f(w)-c){\rm d}\nu(w) \ ,
\end{equation}
and hence (\ref{mar3}) becomes
\begin{eqnarray}\label{mar4}
|P_{r,t}(z,A) - P_{r,t}(y,A)| &=&
\inf_{c\in \R} \left|\int_X  (f(w) - c) {\rm d}\nu(w) \right| \nonumber\\
&\leq& \inf_{c\in \R} \|f-c\|_\infty |\nu(Z)|\nonumber.
\end{eqnarray}
Now using the estimate (\ref{mar10}) and (\ref{mar11}) we obtain the result. 
\end{proof}

The following lemma is now a direct consequence of Theorem~\ref{deficit} and the definitions made just above. 

\begin{lm}\label{BL}  Let $P_{s,t}((x,v),A)$ be the transition function for the process associated to (\ref{eq3X}). There exist 
explicitly computable 
$\delta>0$,  $t_0>0$ and $t_1>0$ such that for all $A$ and all
$t>s \geq t_0$ with $t-s \geq t_1$, 
\begin{equation}\label{mar20}
\rho_{s,t} \leq 1-\delta\ .
\end{equation}
\end{lm}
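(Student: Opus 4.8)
The plan is to combine the one-shot "deficit" estimate of Theorem~\ref{deficit} with the submultiplicativity of the contraction coefficients established in Lemma~\ref{doeb}. First I would fix the constants coming out of Theorem~\ref{deficit}: let $t_0$ be the explicitly computable time threshold there (large enough that $T_f(s)$ is close enough to $T_\infty$ for all $s\geq t_0$ that the approximations (\ref{joint24}) and (\ref{joint25}) hold), and set $\delta_0 := e^{-2}C_{\eta,T_\infty,|\Lambda|}^2 \in (0,1)$, so that Theorem~\ref{deficit} reads $\rho_{s,\,s+2/\eta} \leq 1-\delta_0$ for every $s\geq t_0$. This is the single building block; everything else is bootstrapping.

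Next I would set $t_1 := 2/\eta$, so that the hypothesis $t-s\geq t_1$ guarantees there is room for at least one full block of length $2/\eta$ inside $[s,t]$. Given $t>s\geq t_0$ with $t-s\geq 2/\eta$, write $t - s = m\cdot(2/\eta) + r$ with $m\geq 1$ an integer and $r\in[0,2/\eta)$, and split the interval $[s,t]$ into $[s,s+2/\eta]$, $[s+2/\eta, s+4/\eta]$, \dots, $[s+2(m-1)/\eta,\,s+2m/\eta]$, followed by the leftover $[s+2m/\eta,\,t]$. Each of the first $m$ subintervals starts at a time $\geq s \geq t_0$, so Theorem~\ref{deficit} gives $\rho_{s+2(i-1)/\eta,\,s+2i/\eta}\leq 1-\delta_0$ for $i=1,\dots,m$. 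Applying Lemma~\ref{doeb} repeatedly (it chains across any finite subdivision, since $\rho_{a,c}\leq\rho_{a,b}\rho_{b,c}$ and each factor is at most $1$) yields
$$
\rho_{s,t} \;\leq\; \rho_{s,\,s+2m/\eta}\cdot \rho_{s+2m/\eta,\,t}
\;\leq\; \left(\prod_{i=1}^{m}\rho_{s+2(i-1)/\eta,\,s+2i/\eta}\right)\cdot 1
\;\leq\; (1-\delta_0)^{m}\;\leq\; 1-\delta_0\,,
$$
using $m\geq 1$ and $0<\delta_0<1$. Thus (\ref{mar20}) holds with $\delta:=\delta_0 = e^{-2}C_{\eta,T_\infty,|\Lambda|}^2$, $t_0$ as in Theorem~\ref{deficit}, and $t_1 = 2/\eta$, all explicitly computable. (One should note the minor bookkeeping point that $\rho_{s+2m/\eta,t}\le 1$ always, since the two transition kernels being differenced are probability measures, so the leftover piece never hurts; and the continuity-in-$z$ hypothesis needed for Lemma~\ref{doeb} is visible from the explicit Gaussian-type formulas in Section~3.1.)

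The only real content is Theorem~\ref{deficit} itself, which is already proved; the main obstacle here would merely have been verifying that the hypotheses of Lemma~\ref{doeb} (the Chapman--Kolmogorov identity (\ref{mar}) and continuity of $z\mapsto P_{s,t}(z,A)$) genuinely apply to the process governed by (\ref{eq3X}) — but the Chapman--Kolmogorov identity is automatic for a Markov process, and continuity follows from the explicit formulas for the degenerate-diffusion transition densities together with the fact that jump times form an independent Poisson stream. So the proof is a short assembly of pieces already in hand, and I expect no genuine difficulty; the one place to be slightly careful is to make $t_1$ large enough to accommodate a whole block of length $2/\eta$ rather than assuming one can always fit a deficit step into an arbitrarily short interval.
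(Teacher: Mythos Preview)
Your proposal is correct and matches the paper's approach: the paper states only that Lemma~\ref{BL} ``is now a direct consequence of Theorem~\ref{deficit} and the definitions made just above,'' and what you have written is precisely the natural unpacking of that sentence --- take $t_1=2/\eta$, read Theorem~\ref{deficit} (via Lemma~\ref{uniform}) as valid for every starting time $s\ge t_0$, and use the submultiplicativity from Lemma~\ref{doeb} to pass from the single block $[s,s+2/\eta]$ to any longer interval. Your iterated product $(1-\delta_0)^m$ is more than needed (one factor already suffices since the remaining $\rho$ is at most $1$), but it is of course correct.
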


We are now ready to prove Theorem~\ref{approach}.

\begin{proof}[Proof of Theorem~\ref{approach} for the kinetic Fokker-Planck equation]
Combining Lemma~\ref{doeb} (For $Z = \Lambda \times \R^d$) and Lemma~\ref{BL}, whenever $t-t_0 \geq  nt_1$, 
$$\rho_{t_0,t} \leq (1-\delta)^n\ .$$

Let $\mu^{(1)}$ and $\mu^{(2)}$ are any two probability measures on $\Lambda \times \R^d$, and  use them to initialize the 
 the Markov process associated to our
family of transition kernels at time $t_0$.  For $j=1,2$
define
$$\mu^{(j)} _t(A)   = \int_{\Lambda \times \R^d}{\rm d}\mu_j(z) P_{t_0,t}(z,A)\ .$$
Then for all $t> t_0$
$$\|\mu^{(1)}_t  - \mu^{(2)}_t\| \leq  \rho_{t_0,t}\ .$$
Thus, when (\ref{mar20}) is valid, the memory of the initial condition is washed out exponentially fast.  Of course this by itself does not imply convergence to a stationary state, and if the time inhomogeneity of our process had a strong oscillatory character, for example, we would not expect convergence to a stationary state. However, our process has an asymptotic temporal homogeneity property; namely the generator converges exponentially fast to the generator of a stationary process with an invariant measure $\mu_\star$. We use this to show that for any $\mu_0$, $\mu_t$ converges exponentially  fast to $\mu_\star$. 

 Let $R_{s,t}(z,A)$ be a homogenous family of 
Markov kernels on $(Z,\mathcal{F})$, where in our case $Z = \Lambda \times \R^d$. That is, for $s < t < u$
$$R_{s,u}(z,A) = \int_Z R_{s,t}(z,{\rm d}z')R_{t,u}(z',A)\ .$$
In our application, $R_{s,t}(z,A)$ will be the transition kernel for the Markov process corresponding to  (\ref{eq4}), 
however, it is convenient  to continue  the discussion in a general context for now. 

Let $L$ be the generator for the process governed by $R_{s,t}(z,A)$.  Then for $s < t$, the Kolmogorov backward equation
\begin{equation}\label{koleq1}
-\frac{\partial}{\partial s}R_{s,t}(z,A)  =  L R_{s,t}(z,A)\ ,
\end{equation}
with the final condition $R_{t,t}(z,A) =1_A(z)$, where $1_A$ is the indicator function of $A$. 
For $t>s$, $R_{s,t}(z,\cdot)$, considered as a time dependent measure, satisfies the Kolmogorov forward equation
\begin{equation}\label{koleq2}
\frac{\partial}{\partial t}R_{s,t}(z,\cdot)  =  L ^*R_{s,t}(z,\cdot)\ ,
\end{equation}
where $L^*$ is the adjoint of $L$, meaning that for any bounded continuous function $\varphi$,
\begin{equation}\label{koleq3}
\int L ^*R_{s,t}(z,{\rm d}z')\varphi(z') = \int R_{s,t}(z,{\rm d}z')L\varphi(z')\ .
\end{equation}

Let $K_t$ be the generator of the inhomogeneous process described by $P_{s,t}$. Note that for any
fixed measurable set $A$ in phase space,  $P_{s,t}((x,v),A)$ with $s< t$,   satisfies the Kolmogorov backward equation
\begin{equation}\label{koleq4}
-\frac{\partial}{\partial s}P_{s,t}((x,v),A)  =  K_s P_{s,t}((x,v),A)\ ,
\end{equation}

Then by the Fundamental Theorem of Calculus, (\ref{koleq2}),
(\ref{koleq3}) and (\ref{koleq4})
\begin{align*}
P_{s,t}(z,A) - R_{s,t}(z,A) &= -\int_s^t\left(\frac{{\rm d}}{{\rm d}u} \int R_{s,u}(z,z') P_{u,t}(z,,A){\rm d}z\right) {\rm d}u\\
&= -\int_s^t\left( \int R_{s,u}(z,z') (L- K_u)P_{u,t}(z,A){\rm d}z\right) {\rm d}u
\end{align*}
Let $B_t = K_t - L$, the difference of the generators.  Then we can rewrite the result of this computation as
\begin{equation}\label{koleq5}
P_{s,t}(z,A) - R_{s,t}(z,A)  = \int_s^t\left( \int R_{s,u}(z,z') B_u P_{u,t}(z,A){\rm d}z\right) {\rm d}u\ .
\end{equation}

For any measure $\mu$, let $\mu P_{s,t}$ denote the measure $\mu P_{s,t}(A) = \int {\rm d}\mu(z) P_{s,t}(z,A)$, and likewise for
$\mu R_{s,t}$. Then
if $\mu_\star$ is an invariant measure for $R$, meaning that $\mu_\star = \mu_\star R_{s,t}$, we have from (\ref{koleq5})
that
\begin{equation}\label{koleq6}
\mu_\star P_{s,t}(A) - \mu_\star(A)    = \int_s^t\left( \int \mu_\star(z) B_u P_{u,t}(z,A){\rm d}z\right) {\rm d}u\ .
\end{equation}

We now apply the general formula (\ref{koleq6}) in our specific context, in which 
$B_u = (T(u) - T_\infty)\Delta_v$ where $(T(s) - T_\infty) = Ce^{-cu}$ for some $C$ and $c>0$ and $\mu_\star = f_\infty(v){\rm d}v$.
Then
\begin{equation}\label{remainder}
\|B_u^*\mu_\star\| \leq Ce^{-cu}\|\Delta_v f_\infty \|_{L^1}\ .
\end{equation}
Crucially, $\|\Delta_v f_\infty \|_{L^1} < \infty$ by the explicit formula we have found for $f_\infty$.
Integrating, for all $t < t_2$,
$$
\left\Vert    \int_{t}^{t_2} (B_s^*\mu_\star)  P_{s,t_2}{\rm d}s \right\Vert \leq \int_{t}^{t_1} Ce^{-cu}{\rm d}u \leq \frac{C}{c} e^{-ct}\ .
$$
Consequently,
$$\|\mu_\star P_{t,t_2} -\mu_\star\| \leq   \frac Cce^{-ct}\ .$$

\if false
 Suppose one can prove that 
$$\|\mu_\star P_{t_0,t} - \mu_\star\|$$
tends to zero exponentially fast. Granted this, let $\mu_0$ be any probability measure on $Z$. Then
$$\|\mu_0  P_{t_0,t}- \mu_\star\| \leq \| \mu_\star P_{t_0,t} - \mu_0 P_{t_0,t}\| + 
\|\mu_\star P_{t_0,t} - \mu_\star\|\ .$$
When (\ref{mar20}) is valid, the first term on the right tends to zero exponentially 
fast by what we have explained above. 

It remains to prove that  $\|P_{t_0,t}\mu_\star - \mu_\star\|$tends to zero exponentially fast. We return to our specific context.

while $R_t((x,v),A)$  satisfies the backward equation
$$
-\frac{\partial}{\partial s}R_{t-s}((x,v),A)  =  L R_{t-s}((x,v),A)\ ,
$$

By Duhamel's formula, for any $t_2> t > t_0$,
$$P_{t,t_2}  = R_{t_2-t} + \int_{t}^{t_2} R_{s-t} B_s P_{s,t_0}{\rm d}s \ .$$
Therefore, applying both sides to $\mu_\star$,
$$
\mu_\star P_{t,t_2} = \mu_\star  + \int_{t}^{t_2} (B_s^*\mu_\star)  P_{s,t_2}{\rm d}s
$$
Let $f_\infty(v) (v)$ be the density of $\mu_\star$. Then
\fi

Now for any probability measure $\mu_0$ on phase space, and any  $s> t_0$ and any $t > t_0 + n(t_1-t_0)+ s$,
$$\|\mu_0 P_{0,t}  - \mu_\star \| =   \|(\mu_0 P_{0,s} )P_{s,t}  - \mu_\star P_{s,t} \| + \|\mu_\star  P_{s,t}  - \mu_\star \|$$
By Lemma~\ref{BL}, $\|(\mu_0 P_{0,s})P_{s,t}  - \mu_\star P_{s,t} \| \leq (1-\delta)^n$.  By the computations above,
$\|\mu_\star  P_{s,t}  - \mu_\star \| \leq (C/c)e^{-cs}$. 

Now for  $t> t_0$, we have that
$$\|\mu_0 P_{0,t}  - \mu_\star \|  \leq (C/c)e^{(t-t_0)/2}  + \frac{1}{1-\delta} e^{\ln(1-\delta)(t-t_0)/2}\ .$$
This gives us the exponential relaxation. 
\end{proof}

{\bf Acknowledgements} E.A.~Carlen  is partially supported by  NSF Grant Number  DMS  1501007. 
J.L.~Lebowitz is partially supported by AFOSR Grant FA9550-16-1-0037 and NSF Grant Number DMR1104501


\begin{thebibliography}{30}

\small{

\bibitem{AEP} L.~Arkeryd, R.~Esposito and M.~Pulvirenti, \textit{The Boltzmann equation for weakly 
inhomogenous initial data}. Comm. Math. Phys., {\bf 111}, 393-407, 1987 

\bibitem{BL} P.~G.~Bergmann and J.~L.~Lebowitz, \textit{New Approach to Nonequilibrium Process}. 
Physical Review, 99, 578-587, 1955. 


\bibitem{CLM} E.~A.~Carlen,  J.~L.~Lebowitz and C.~Mouhot,  \textit{Exponential approach to, 
and properties of, a non-equilibrium steady state in a dilute gas.} Braz. J. Probab. Stat., {\bf  29},  (2015), 372--386.

\bibitem{Cer} C.~Cercignani, \textit{The Boltzmann Equations and its Applications}, Berlin, Springer Verlag, 1987.

\bibitem{DV} L.~Desvillettes and C.~Villani,  \textit{On the trend to global equilibrium for spatially 
inhomogeneous kinetic systems: the Boltzmann equation}. Invent. Math. {\bf 159} ,  245-316, 2005

\bibitem{EGKM}R.~Esposito, Y.~Guo, C.~Kim and R.~Marra, \textit{Non-Isothermal Boundary in 
the Boltzmann Theory and Fourier Law.} Comm. Math. Phys. {\bf 323} (2013), 359-385

\bibitem{FM} N.~Fournier and S. M\'el\'eard \textit{A Markov Process Associated with a Boltzmann 
Equation Without Cutoff and for Non-Maxwell Molecules}, J. Stat Phys., {\bf 104}, (2001) 583-604


\bibitem{FM2} N.~Fournier and S. M\'el\'eard \textit{A stochastic particle numeric method for a 
3D Boltzmann equation without cuttoff}, Math. Computation, {\bf 71}, (2001) 583-604

\bibitem{GMM} M.~Gualdani, S.~Mischler and C.~Mouhot, \textit{Factorization for non-symmetric operators and exponential H-theorem}, arXiv preprint arXiv:1006.5523. 

\bibitem{HN} F.~H\'erau, F. ~Nier: \textit{Isotropic hypoellipticity and trend to equilibrium for the
Fokker-Planck equation with a high-degree potential}. Arch. Ration. Mech.
Anal. {\bf 171},  (2004) pp.151Ð218

\bibitem{K34}A.~N.Kolmogorov, \textit{Zuff\"allige  Bewegungen}, Ann. Math. II, {\bf 35} (1934), 116-117.

\bibitem{LB} J.~L.~Lebowitz and P~G.~Bergmann \textit{Irreversible Gibbsian Ensembles} Annals of Physics, {\bf 1}, no 1, (1957) 1-23. 


\bibitem{McK} H.~P.~McKean, \textit{A winding problem for a resonator driven by white noise}, J. Math. Kyoto Univ.
{\bf 2}, no 2, (1963) 227-235.


\bibitem{Var} 
S.~R.~S.~Varadhan: \textit{Probability Theory} New York University, New York (2001).

\bibitem{Vil} 
C.~Villani: \textit{Hypocoercivity} Memoirs of the A.M.S. {\bf 202}, no. 950, A.M.S., Providence RI, (2009).



}







\end{thebibliography}
\end{document}